\theoremstyle{thmstyleone}%
\newtheorem{theorem}{Theorem}
\newtheorem{proposition}{Proposition}
\newtheorem{lemma}{Lemma}
\newtheorem{example}{Example}
\theoremstyle{thmstylethree}%
\DeclareMathOperator{\cf}{\LTLsquare\!\!\rightarrow}
\DeclareMathOperator{\mcf}{\LTLdiamond\hspace{-3.5pt}\rightarrow}
\DeclareMathOperator{\ucf}{\LTLhalfsquare\!\!\rightarrow}
\DeclareMathOperator{\emcf}{\LTLhalfdiamond\hspace{-3.5pt}\rightarrow}
\newcommand{\ltl}{LTL\xspace}
\newcommand{\kltl}{KLTL\xspace}
\newcommand{\yltl}{YLTL\xspace}
\newcommand{\fo}{FO[$<$]\xspace}
\newcommand{\foe}{FO[$<$,\textit{E}]\xspace}
\newcommand{\AP}{\mathit{AP}}
\newcommand{\tofo}{\mathit{fo}}
\newcommand{\ags}{A}
\newcommand{\true}[0]{\mathit{true}}
\newcommand{\false}[0]{\mathit{false}}
\newcommand{\ldot}{\mathpunct{.}}
\newcommand{\A}{\mathcal{A}}
\newcommand{\U}{\LTLuntil}
\newcommand{\G}{\LTLglobally}
\newcommand{\K}{\mathtt{K}}
\newcommand{\kripke}{\mathcal{K}}
\newcommand{\ekripke}{\mathcal{E}}
\newcommand{\mods}{\mathit{Mod}}
\newcommand{\suc}{\mathit{succ}}
\newcommand{\mini}{\mathit{min}}
\renewcommand{\models}{\vDash}
\newcommand{\nmodels}{\nvDash}
\newcommand{\ap}{\text{AP}}
\newcommand{\donotshow}[1]{}
\begin{document}

\title[Explainability Requirements as Hyperproperties]{Explainability Requirements as Hyperproperties\vspace{0.5em}}

\author{\fnm{Bernd} \sur{Finkbeiner}}

\author{\fnm{Julian} \sur{Siber}}

\affil{\orgname{CISPA Helmholtz Center for Information Security}, \orgaddress{\street{Stuhlsatzenhaus 5}, \city{Saarbrücken}, \postcode{66123}, \state{Saarland}, \country{Germany}}}

\abstract{
  Explainability is emerging as a key requirement for autonomous systems. While many works have focused on what constitutes a valid explanation, few have considered formalizing explainability as a system property. In this work, we approach this problem from the perspective of hyperproperties. We start with a combination of three prominent flavors of modal logic and show how they can be used for specifying and verifying counterfactual explainability in multi-agent systems: With Lewis' counterfactuals, linear-time temporal logic, and a knowledge modality, we can reason about whether agents know \emph{why} a specific observation occurs, i.e., whether that observation is \emph{explainable} to them. We use this logic to formalize multiple notions of explainability on the system level. We then show how this logic can be embedded into a hyperlogic. Notably, from this analysis we conclude that the model-checking problem of our logic is decidable, which paves the way for the automated verification of explainability requirements.}

\maketitle

\section{Introduction}\label{sec:intro}
The increase in system complexity and opaqueness perceived in recent years has been answered by a plethora of techniques aimed at providing some sort of explanation for observed system behavior~\cite{AlmagorL20,BrandaoMMLC22,Wachter18,RosenfeldR19}. While this demonstrates a need for systems to be explainable, there is no formal theory to specify different notions of explainability and to algorithmically verify them. In this paper, we make the claim that hyperproperties, and their respective logics, are an excellent basis for such a formal theory of explainability. We start from previous theories for individual instances of \emph{explanations}~\cite{HalpernP05a,HalpernP05b}, which combine counterfactual and epistemic reasoning. Besides extending them to system specifications, we add temporal reasoning to specify explainability on the possibly infinite executions of multi-agent systems. We use modal operators for these three reasoning dimensions to express explainability requirements such as:
$$
	\G \Big( \lnot \mathit{offer} \rightarrow  \big( \bigvee_{\alpha,\beta \in \mathit{Att}(a)} \K_a \left((\alpha \land \beta) \mcf_a \mathit{offer}\right)\big)\Big) \enspace ,
$$
which we simply term \emph{Internal Counterfactual Explainability} (ICE). Interpreted in a hiring system where some agent~$a$ applies to get a job offer, ICE states that, whenever agent~$a$ does not get the offer (i.e., atomic proposition $\mathit{offer}$ does not hold), they know that if they had applied with some (other) attribute values $\alpha, \beta \in\mathit{Att}(a)$, they would have gotten the offer. We call this notion \emph{internal} because it depends only on actions performed by agent $a$ themselves. The formula for ICE uses operators from all three modal logics that we fuse together: It uses the temporal operator $\G$ to specify that the requirement holds at every time point and it uses the knowledge operator $\K_a$ to express that agent~$a$ has knowledge about some counterfactual dependency expressed with the counterfactual operator $\mcf_a$.  Later on, we will use this logic to formalize other aspects of explainability, such as \emph{weak}, \emph{external}, and \emph{general} explainability. We will also see how these notions discriminate between -- intuitively -- explainable and unexplainable systems in Section~\ref{sec:motivation}. This appeal to intuition is without alternative: There is no universally correct definition of explainability~\cite{KohlBLOSB19} and much depends on the context and the agents involved. The strength of our modal-logic approach is exactly that it provides a flexible specification language that can be applied to varying contexts and definitions, while retaining a general model-checking algorithm.

Double-fusions of the three modal logics we consider have been studied extensively: Epistemic temporal logic has been used in security for information-flow control~\cite{HalpernO08,BalliuDG11}, counterfactual temporal logic for expressing causal dependencies in reactive systems~\cite{FinkbeinerS23,CoenenFFHMS22}, and counterfactual epistemic logics to characterize notions of rationality in game theory ~\cite{Stalnaker06,Sandu21}. Our work brings these diverse frameworks together based on the viewpoint that explainability is an \emph{intended flow of information about counterfactual dependencies}. This interpretation stands in the tradition of a long line of works on \emph{individual} causal explanations~\cite{Lewis86,HalpernP05b,Beckers22,GoyalWEBPL19}. With this paper, we shift the focus away from the question of what constitutes a valid individual explanation toward analyzing the abstract epistemic properties that the global system needs to fulfill such that an explanation is available to an agent whenever it is needed. In short, we do not analyze explanations, but \emph{explainability}.

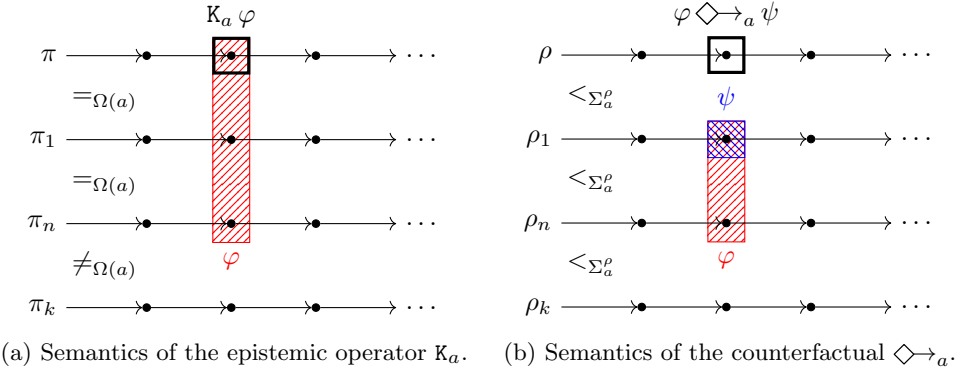
\begin{figure}
    \centering
    \begin{subfigure}{.49\textwidth}
        \centering
    \begin{tikzpicture}[auto,
			node distance=1 and 1,every state/.style={minimum size=2pt,inner sep=1pt,fill},
            square/.style={regular polygon,minimum size=0.65cm,regular polygon sides=4}]
        
        \node[draw,rectangle,
                minimum width =0.485cm, 
                minimum height = 2.7cm,pattern=north east lines, pattern color=red,draw=red](s) at (1.113,1.1) {};
    
        \node[state](2){};
        \node[left = of 2](1){$\pi_n$};
        \node[state,right = of 2](3){};
        \node[state,right = of 3](4){};
        \node[right = of 4](5){\dots};
        \path[->,draw] (1) edge (2)
        (2) edge (3)
        (3) edge (4)
        (4) edge (5);
        
        \node[state,above = of 2](2a){};
        \node[left = of 2a](1a){$\pi_1$};
        \node[state,right = of 2a](3a){};
        \node[state,right = of 3a](4a){};
        \node[right = of 4a](5a){\dots};
        \path[->,draw] (1a) edge (2a)
        (2a) edge (3a)
        (3a) edge (4a)
        (4a) edge (5a);

        \node[state,above = of 2a](2b){};
        \node[left  = of  2b](1b){$\pi$};
        \node[state,right = of 2b](3b){};
        \node[state,right = of 3b](4b){};
        \node[right = of 4b](5b){\dots};
        \path[->,draw] (1b) edge (2b)
        (2b) edge (3b)
        (3b) edge (4b)
        (4b) edge (5b);
        
        \node[state,below = of 2](2c){};
        \node[left  = of  2c](1c){$\pi_k$};
        \node[state,right = of 2c](3c){};
        \node[state,right = of 3c](4c){};
        \node[right = of 4c](5c){\dots};
        \path[->,draw] (1c) edge (2c)
        (2c) edge (3c)
        (3c) edge (4c)
        (4c) edge (5c);
        
        \node[above= 0.2 of 3b](a){$\K_a \,\varphi$};
        \node[draw,square,very thick](s) at (3b) {};
        \node[below= 0.2 of 3](a){\textcolor{red}{$\varphi$}};
        \node[above right = 0.02 and -0.05 of 1a]{$=_{\Omega(a)}$};
        \node[above right = 0.02 and -0.05 of 1]{$=_{\Omega(a)}$};
        \node[above right = 0.02 and -0.05 of 1c]{$\neq_{\Omega(a)}$};
    \end{tikzpicture}
    \caption{Semantics of the epistemic operator $\K_a$.}
    \label{subfig:sem1}
    \end{subfigure}
    \begin{subfigure}{.49\textwidth}
        \centering
    \begin{tikzpicture}[auto,
			node distance=1 and 1,every state/.style={minimum size=2pt,inner sep=1pt,fill},
            square/.style={regular polygon,minimum size=0.65cm,regular polygon sides=4}]

        \node[draw,rectangle,
                minimum width =0.485cm, 
                minimum height = 1.6cm,pattern=north east lines, pattern color=red,draw=red](s) at (1.113,0.55) {};
                
        \node[draw,rectangle,
                minimum width =0.485cm, 
                minimum height = 0.485cm,pattern=north west lines, pattern color=blue,draw=blue](s) at (1.113,1.11) {};

        \node[state](2){};
        \node[left = of 2](1){$\rho_n$};
        \node[state,right = of 2](3){};
        \node[state,right = of 3](4){};
        \node[right = of 4](5){\dots};
        \path[->,draw] (1) edge (2)
        (2) edge (3)
        (3) edge (4)
        (4) edge (5);
        
        \node[state,above = of 2](2a){};
        \node[left = of 2a](1a){$\rho_1$};
        \node[state,right = of 2a](3a){};
        \node[state,right = of 3a](4a){};
        \node[right = of 4a](5a){\dots};
        \path[->,draw] (1a) edge (2a)
        (2a) edge (3a)
        (3a) edge (4a)
        (4a) edge (5a);

        \node[state,above = of 2a](2b){};
        \node[left  = of  2b](1b){$\rho$};
        \node[state,right = of 2b](3b){};
        \node[state,right = of 3b](4b){};
        \node[right = of 4b](5b){\dots};
        \path[->,draw] (1b) edge (2b)
        (2b) edge (3b)
        (3b) edge (4b)
        (4b) edge (5b);
        
        \node[state,below = of 2](2c){};
        \node[left  = of  2c](1c){$\rho_k$};
        \node[state,right = of 2c](3c){};
        \node[state,right = of 3c](4c){};
        \node[right = of 4c](5c){\dots};
        \path[->,draw] (1c) edge (2c)
        (2c) edge (3c)
        (3c) edge (4c)
        (4c) edge (5c);
        
        \node[above= 0.2 of 3b](a){$\varphi \mcf_a \psi$};
        \node[draw,square,very thick](s) at (3b) {};
        \node[below= 0.2 of 3](a){\textcolor{red}{$\varphi$}};
        \node[above= 0.2 of 3a](a){\textcolor{blue}{$\psi$}};
        \node[above right = 0.05 and -0.05 of 1a]{$<_{\Sigma_a^\rho}$};
        \node[above right = 0.05 and -0.05 of 1]{$<_{\Sigma_a^\rho}$};
        \node[above right = 0.05 and -0.05 of 1c]{$<_{\Sigma_a^\rho}$};
    \end{tikzpicture}
    \caption{Semantics of the counterfactual $\mcf_a$.}
    \label{subfig:sem2}
    \end{subfigure}
    \caption{Illustrating the semantics of the statements $\K_a \, \varphi$ and $\varphi \mcf_a \psi$ on a set of traces. The statements are evaluated at the second position of the trace $\pi$, which is framed by the black square. The epistemic operator $\K_a$ (cf.~Subfigure~\ref{subfig:sem1}) requires $\varphi$ to hold at the same position on all traces $\pi'$ with an observation equivalent prefix, i.e., where $\pi =_{\Omega(a)} \pi'$ is satisfied. These positions are covered by the area with diagonal lines (colored red). In contrast, the counterfactual $\mcf_a$ (cf.~Subfigure~\ref{subfig:sem2}) requires that the trace \emph{closest} to $\rho$ that satisfies $\varphi$, which is in this case $\rho_1$, also satisfies $\psi$ at the same position. This is marked by the area with crossed lines (colored red and blue). $<_{\Sigma_a^\rho}$ is used in the illustration to denote, e.g., $(\rho_n,\rho_k) \in \Sigma_a^\rho \land (\rho_k,\rho_n) \notin \Sigma_a^\rho$, which means that $\rho_n$ is strictly more similar to $\rho$ than $\rho_k$. }
    \label{fig:sem}
\end{figure}

\paragraph*{Knowledge and Counterfactuals.} 

The logic we consider is an extension of epistemic temporal logic, in particular of Linear Temporal Logic (\ltl) with the knowledge modality $\K$ (\kltl). We extend KLTL with counterfactual conditionals as defined by Lewis~\cite{Lewis73}, which we interpret on paths of a multi-agent system. 

We illustrate the semantics of these two modal operators in Figure~\ref{fig:sem}. The epistemic formula $\K_a \, \varphi$ is satisfied at a given position of a given trace if all traces that are indistinguishable for agent a satisfy $\varphi$ (cf.~Subfigure~\ref{subfig:sem1}). Indistinguishability is defined based on the observation-equivalence $=_{\Omega(a)}$ that compares the prefixes of two traces with respect to the observations of agent $a$. The Lewisian counterfactual $\varphi \mcf_a \psi$, on the other hand, informally has the following meaning: ``If $\varphi$ had been true then $\psi$ might also have been true''. More formally, the counterfactual formula has the following semantics: it holds on a position~$i$ of a given trace if one of the \emph{closest} traces that satisfy $\varphi$ at $i$ also satisfies $\psi$ at $i$. These semantics are illustrated in in Subfigure~\ref{subfig:sem2}, where there is in fact a unique closest trace to $\rho$ that satisfies $\varphi$, which also satisfies $\psi$, such that the counterfactual formula holds. Closeness of traces is modeled through a binary similarity relation $\Sigma_a^\rho$ that defines whether some trace is at least as similar to $\rho$ as another trace. Our approach is parametric for varying agent-specific similarity metrics, such that $\Sigma_a^\rho$ depends on agent $a$. This, in particular, allows to model different internal causal models for different agents. We follow Lewis' formulation of counterfactuals and do not assume that there is a unique closest execution for every antecedent in a counterfactual, which is often termed the limit assumption and endorsed by the competing counterfactual theory of Stalnaker~\cite{Stalnaker81}. It has been noted in previous work that this assumption is easily violated when combining counterfactuals and temporal logics~\cite{FinkbeinerS23}. 

The combination of knowledge and counterfactual operators gives a specification like ICE the following semantics: it holds at a given position if there is a combination of attribute values $\alpha$ and $\beta$ such that on all traces $\rho$ that are indistinguishable for agent $a$, making the minimal changes to the trace such that $\alpha \land \beta$ holds results in a trace where agent $a$ gets the offer. The nature of the minimal changes is defined by the similarity relation $\Sigma_a^\rho$, i.e., the internal causal model of agent $a$. Hence, ICE requires that on any trace agent $a$ knows about some counterfactual explanation $\alpha \land \beta$ for the outcome $\mathit{offer}$ whenever this outcome does not happen.

\paragraph{Expressivity and Model Checking.} For the logic of the combined three modal systems, we construct a translation function that maps formulas to sentences in first-order logic of order with an equal-level predicate (\foe)~\cite{FinkbeinerZ17}. This logic allows to quantify over tuples of traces and positions and hence it is a logic for hyperproperties~\cite{ClarksonS10}. Our translation serves two purposes. On the one hand, it is a first result on the comparative expressiveness of this logic in relation to other hyperlogics, i.e., it places our logic for explainability into the hierarchy of hyperlogics~\cite{CoenenFHH19}. On the other hand, it proves that model-checking formulas in this logic on finite-state multi-agent systems is decidable, and provides an algorithm via the proposed encoding into \foe. As far as we know, this is the first positive decidability result for model-checking of arbitrarily nested temporal and counterfactual operators, as our earlier study relegated counterfactuals to top-level operators~\cite{FinkbeinerS23}. Moreover, this previous work did not include knowledge operators which are necessary for formalizing explainability.

\paragraph*{Contributions.} 
In short, we make the following contributions.
\begin{itemize}
	\item We define a combined logic of counterfactuals, knowledge and temporal modalities on the executions of multi-agent systems. This is an extension of our earlier work that did not consider knowledge operators~\cite{FinkbeinerS23}.
	\item We formalize multiple notions of explainability in this logic, demonstrate practically how they distinguish explainable systems, and theoretically study their entailment relations.
	\item We outline a model-checking algorithm for this logic on multi-agent systems with a finite state-space. This algorithm relies on an encoding into the hyperlogic \foe, which also yields some first insights into the comparative expressiveness of our presented logic.
\end{itemize}

\section{Preliminaries} 
We recall some background on extended Kripke structures as models of multi-agent systems and on temporal, epistemic, and hyper logics as specification languages.

\subsection{Multi-Agent Systems}

We consider \emph{Kripke structures} as the fundamental model of temporal logic. A Kripke structure $\kripke = (S,s_0,\Delta,\AP,\Lambda)$ is a tuple, where $S$ is a set of states, $s_0$ is the initial state, $\Delta: S \mapsto 2^S$ is a transition function such that $\Delta(s) \neq \emptyset$ for all states $s \in S$, $\ap$ is a set of \emph{atomic propositions}, and $\Lambda : S \mapsto 2^\AP$ is a function labeling states with atomic propositions. We call $\kripke$ a \emph{finite} Kripke structure if the set of states $S$ is finite. A \emph{path} $\rho = \rho[0] \rho[1] \ldots \in S^\omega$ of a Kripke structure $K$ is an infinite sequence of states such that the transition function is respected: $\rho[i+1] \in \Delta(\rho[i])$ for all $i \in \mathbb{N}$. The \emph{trace} $\pi =  \pi[0] \pi[1] \ldots \in (2^\AP)^\omega$ on a path $\rho$ is the sequence of corresponding state labels, i.e., we have $\pi[i] = \Lambda(\rho[i])$ for all $i \in \mathbb{N}$. Let $\Pi(\kripke)$ denote the set of traces on initial paths starting in $s_0$, i.e., on $\rho$ such that $\rho[0] = s_0$. For some trace $\pi$, $\pi[0,n] \in S^*$ denotes its prefix of length $n+1$.
We can extend a Kripke structure $\kripke$ with an \emph{observation map} $\Omega : \ags \mapsto 2^\AP$ to reason about the local observations of a set of agents $\ags$. For some agent $a \in \ags$, $\Omega(a)$ describes the set of atomic propositions that are observable to agent $a$. For some trace $\pi$ of $\kripke$, $\Omega_a(\pi) \in (2^\AP)^\omega$ are the partial observations of $a$ along the trace: $\Omega_a(\pi)[i] = \pi[i] \cap \Omega(a)$. We say $\ekripke = (\kripke,\Omega)$ is finite if $\kripke$ is finite. The set of traces of $\ekripke = (\kripke,\Omega)$ is denoted $\Pi(\ekripke) = \Pi(\kripke)$. The set of extended Kripke structures that satisfy some logical formula $\varphi$ is denoted by $\mods(\varphi)$.

\subsection{Epistemic Temporal Logic}\label{subsec:logics}
The basis of our logic is \kltl, which extends Linear Temporal Logic (\ltl)~\cite{Pnueli77} with a knowledge modality~\cite{FaginHMV1995}. The syntax of \kltl is defined by the following grammar:
$$
	\varphi \Coloneqq p \mid \neg \varphi \mid \varphi \lor \varphi \mid \LTLnext \varphi \mid \varphi \U \varphi \mid \K_a \, \varphi \mid \LTLnext^- \varphi \mid\LTLuntil^- \varphi\enspace ,
$$
where $p \in \AP$ is an atomic proposition and $a \in \ags$ is an agent. Additionally, \kltl includes the following derived operators: Boolean constants ($\true$, $\false$) and connectives ($\lor$, $\rightarrow$, $\leftrightarrow$), and the temporal operator `Eventually' ($\LTLeventually \varphi \equiv \true \LTLuntil \varphi$) as well as its dual, `Globally' ($\LTLglobally \varphi \equiv \lnot \LTLeventually \lnot \varphi $). The semantics of a \kltl formula $\varphi$ with respect to an extended Kripke structure $\ekripke = (\kripke,\Omega)$, a trace $\pi \in \Pi(\kripke)$, and a position $i$ is defined by the following satisfaction relation:
\begin{alignat*}{2}
	&\ekripke,\pi,i \models p	&~\text{ iff }~	&p \in \pi[i],\\
	&\ekripke,\pi,i \models \lnot \varphi	&~\text{ iff }~	& \ekripke,\pi,i \nmodels \varphi,\\
	&\ekripke,\pi,i \models \varphi_1 \lor \varphi_2	&~\text{ iff }~	& \ekripke,\pi,i \models \varphi_1 \lor \ekripke,\pi,i \models \varphi_2,\\
	&\ekripke,\pi,i \models \LTLnext \varphi	&~\text{ iff }~	& \ekripke,\pi,i+1 \models \varphi,\\
	&\ekripke,\pi,i \models \varphi_1 \LTLuntil \varphi_2	&~\text{ iff }~	& \exists k \geq i: \ekripke,\pi,k \models \varphi_2 \land \forall i \leq j < k: \ekripke,\pi,j \models \varphi_1,\\
	&\ekripke,\pi,i \models \K_a \, \varphi	&~\text{ iff }~	& \forall \pi' \in \Pi(\kripke): (\Omega_a(\pi)[0,i] = \Omega_a(\pi')[0,i]) \rightarrow \ekripke,\pi',i \models \varphi .
\end{alignat*}
Hence, an agent $a$ has knowledge of some property $\varphi$, expressed through $\K_a(\varphi)$, iff this property holds on all observation-equivalent prefixes of the same length. These semantics of the knowledge modality $\K_a$ correspond to the so-called synchronous perfect recall semantics~\cite{MeydenS99,HalpernMV04}, which means that agents gain knowledge through distinguishing prefixes of different length and based on divergence at any point in the past.
System-level satisfaction is based on a universal application of the trace semantics: $\ekripke = (\kripke,\Omega)$ satisfies $\varphi$, denoted by $\ekripke \models \varphi$, iff for all traces $\pi \in \Pi(\kripke): \ekripke,\pi,0 \models \varphi$. We denote the set of \kltl formulas over some alphabet $\AP$ by $\mathcal{L}_{\text{\kltl}}(\AP)$.

\paragraph*{Past-operators} Since an explanation for some effect is usually found in its past, we use \kltl with temporal past-operators. We define these operators as usual in the literature~\cite{LichtensteinPZ85}. Given an extended Kripke structure $\ekripke = (\kripke,\Omega)$, an initial trace $\pi \in \Pi(\kripke)$, and a position $i$, we define the semantics of the past-operators as follows:
\begin{alignat*}{2}
	&\ekripke,\pi,i \models \LTLnext^- \varphi	&~\text{ iff }~	& i > 0 \land \ekripke,\pi,i-1 \models \varphi,\\
	&\ekripke,\pi,i \models \varphi_1 \LTLuntil^- \varphi_2	&~\text{ iff }~	& \exists k \leq i: \ekripke,\pi,k \models \varphi_2 \, \land \forall i \geq j > k: \ekripke,\pi,j \models \varphi_1 \enspace.
\end{alignat*}
The `Before' modality $\LTLnext^-$ refers to a previous time point, we define it such that it is trivially false at the start of a given trace. The `Since' operator $\LTLuntil^-$ is a mirror image of `Until' ($\LTLuntil$): It requires that $\varphi_2$ was true at some earlier time point $k$, and that $\varphi_1$ holds on all time points in between. We also add the derived past operators `Once' ($\LTLeventually^- \varphi \equiv \true \LTLuntil^- \varphi$), as well as its dual, `Historically' ($\LTLglobally^- \varphi \equiv \lnot \LTLeventually^- \lnot \varphi $).

\section{In-Depth Example}\label{sec:motivation}

We illustrate our approach for specifying explainability at the example of a simplified hiring system consisting of two agents: Applicant and Recruiter. The high-level idea is that, in every round, Recruiter chooses their preferred values for two attributes \emph{job} and \emph{gender}, and Applicant chooses the attribute values with which they apply in that round. Applicant gets an offer in some round if their attributes match Recruiter's preference. The hiring goes on infinitely, such that Applicant effectively models a stream of applicants applying at the company. The difference between the explainable and unexplainable version of the hiring system is that in the former, the preference of Recruiter is observable to Applicant, while it is hidden in the latter, unexplainable hiring system. Crucially, in both versions Applicant and Recruiter fix their attributes \emph{concurrently}, such that in the explainable hiring scenario Applicant only observes Recruiter's preference \emph{after} the decision. This means that the outcomes Applicant can enforce are the same in both scenarios, e.g., in neither scenario Applicant has a strategy that ensures that they will eventually get an offer. What is different, however, is that in the explainable hiring system Applicant gains knowledge on \emph{why exactly} they did not get the offer in some round, while in the unexplainable system Applicant only knows that they should have done \emph{something} differently.

\subsection{Hiring System Model}\label{subsec:hiring}

To develop this hiring example more formally, consider the following Kripke structure $\kripke = (S,s_0,\Delta,\AP,\Lambda)$ underlying both the explainable and unexplainable hiring system. The set of states $S$ is determined by the different values the attributes of Applicant and Recruiter may have:
\begin{align*}
	S = \big\{(a_{\mathit{job}},a_{\mathit{gen}},r_{\mathit{job}},r_{\mathit{gen}}) \mid \; & a_{\mathit{job}},r_{\mathit{job}} \in \{\mathit{accounting},\mathit{sales},\mathit{it}\}\\ & \land \; a_{\mathit{gen}},r_{\mathit{gen}} \in \{\mathit{m},\mathit{f}\} \big\} \cup \{s_0\} \enspace ,
\end{align*}
where $s_0$ is a unique initial state. Since every round is effectively a new, closed hiring process, the transition function $\Delta$ connects every state with itself and every other state, i.e., $\Delta(s) = S$ for all $s \in S$, so that the underlying graph is fully connected. 

The set of atomic propositions resembles the attribute choices of the two agents. For some agent $x$ we define the corresponding attributes with the function $\mathit{Att}^+$:
\begin{align*}
	\mathit{Att}^+(x) &= \{x_v \mid v \in \{\mathit{accounting},\mathit{sales},\mathit{m},\mathit{f}\} \} \enspace,\\
	\AP &= \mathit{Att}^+(a) \cup \mathit{Att}^+(r) \cup \{\mathit{offer}\} \enspace .
\end{align*}

For the specifications we sometimes need both positive and negated atomic propositions for attributes, which is covered by the function $\mathit{Att}$:
\begin{align*}
	\mathit{Att}(x) &= \mathit{Att}^+(x) \cup \{\lnot p \mid p \in \mathit{Att}^+(x) \} \enspace .
\end{align*}

The labeling function $\Lambda$ labels each state with the attributes picked by the agents, and with $\mathit{offer}$ if they are matching. The initial state is labeled with the empty set:
\begin{align*}
	\Lambda(s_0) = &\{\} ,\enspace
	\Lambda((x,y,v,w)) = \{a_x,a_y,r_v,r_w\} \cup \{ \mathit{offer} \mid x = v \land y = w\} \enspace .
\end{align*}

The unexplainable hiring system $\mathcal{U} = (\kripke,\Omega^\mathcal{N})$ differs from the explainable one $\ekripke = (\kripke,\Omega^\ekripke)$  only in the observation map. We have for an agent $x \in \{a,r\}$:
\begin{align*}
	\Omega^\ekripke(x) = \AP = \mathit{Att}^+(a) \cup \mathit{Att}^+(r) \cup \{\mathit{offer}\} ,\enspace
	\Omega^\mathcal{U}(x) = \mathit{Att}^+(x) \cup \{\mathit{offer}\} \enspace .
\end{align*}
Hence, in the explainable hiring system Applicant can observe the preferences of Recruiter (retrospectively), while in the unexplainable hiring system this information is hidden.

As discussed in Section~\ref{sec:intro} with reference to Figure~\ref{fig:sem}, the semantics of counterfactuals are defined with respect to a similarity relation $\Sigma_a$, which -- intuitively speaking -- encodes the minimal changes that are necessary to go from one trace to another, based on the intenral model of agent $a$. We now give a concrete similarity relation for the application scenario. Here, we only define a relation for Applicant as the properties we consider contain only counterfactuals indexed by $a$. A trace $\pi_{1}$ is \emph{at least as similar} to the reference trace $\pi$ at a given time point $i$ as some other trace $\pi_2$ from the perspective of agent $a$, if the following formula holds at $i$, where pairs with the trace variables $\pi,\pi_1,\pi_2$ are used to refer to atomic propositions on a specific trace:
\begin{align*}
	\Sigma(a)(\pi,\pi_1,\pi_2) = \, &\LTLglobally \Big(\!\!\bigwedge_{p \in A} \!\!\!\big((p,\pi) \not\leftrightarrow (p,\pi_1)\big) \rightarrow  \big((p,\pi) \not\leftrightarrow (p,\pi_2)\big)\Big) \, \land \\
	&\LTLglobally^- \Big(\!\!\bigwedge_{p \in A} \!\!\!\big((p,\pi) \not\leftrightarrow (p,\pi_1)\big) \rightarrow  \big((p,\pi) \not\leftrightarrow (p,\pi_2)\big)\Big) \enspace ,
\end{align*}
where $A = \mathit{Att}^+(a) \cup \mathit{Att}^+(r)$.
Detailed semantics of this relational property follow in Section~\ref{subsec:semantics}. In the formula, the `Historically' operator $\LTLglobally^-$ is the past-time version of the `Globally' operator $\LTLglobally$, which imposes a constraint on all previous time points in a trace. Combined with the regular `Globally' operator, the above formula expresses that the specified requirement does not only hold in the future but also in the past. Note that the past-operator's detailed semantics are given in Section~\ref{sec:yltl}. The specified requirement that is invariant in both past and future states that changes between the reference trace $\pi$ and the at least as similar trace $\pi_1$ also have to be present in the less similar trace $\pi_2$. The changes between $\pi$ and $\pi_2$ can be a proper superset of the changes between $\pi$ and $\pi_1$, but it may also be that $\pi_1$ and $\pi_2$ are identical. Such subset-based similarity has been applied in many notions of causality~\cite{HalpernP05a,CoenenFFHMS22,CoenenDFFHHMS22}. We will see examples of tuples of traces that are in the similarity relation in the following. Take note that formulas encoding the similarity relation are KLTL formulas over a modified alphabet, i.e., the alphabet is $\AP \times \Pi$ where $\Pi$ is a set of trace variables. This is because the similarity relation needs to relate three traces with each other: it is a hyperproperty~\cite{ClarksonS10}.

\subsection{Semantics of Explainability}\label{subsec:example_traces}

To see how the explainability requirement specified by ICE (cf.~Section~\ref{sec:intro}) discriminates between these two hiring systems, consider the following infinite trace $\pi \in \kripke$, which is present in both systems:
\begin{align*}
\pi = \{\}\{a_\mathit{it},a_\mathit{f},r_\mathit{sales},r_\mathit{f}\} \{\}^\omega \enspace .
\end{align*}
The $\omega$-superscript indicates that this part of the trace is repeated infinitely often, i.e., in this case the trace ends up looping in the initial state.
Let us now check whether trace $\pi$ satisfies the requirement posed by ICE. Hence, we now check the semantics that we described abstractly in Section~\ref{sec:intro} with respect to Figure~\ref{fig:sem} for this specific trace $\pi$. The ICE requirement states that at all positions where $\mathit{offer}$ does not hold, the knowledge predicate $\K_a \left((\alpha \land \beta) \mcf_a \mathit{offer}\right)$ has to hold for at least one pair of attributes $\alpha,\beta \in \mathit{Att}(a)$. By the semantics of $\K$, this is the case if the counterfactual conditional $(\alpha \land \beta) \mcf_a \mathit{offer}$ holds at this position on \emph{all} traces that are indistinguishable for Applicant (cf.~Subfigure~\ref{subfig:sem1}). Now, consider the second position of $\pi$. Here, $\mathit{offer}$ does not hold. The set of traces with an observation equivalent prefix are all traces $\pi'$ such that $\Omega_a(\pi)[0,2] = \Omega_a(\pi')[0,2]$. For the unexplainable hiring system $\mathcal{N}$ we can now show that, no matter which pair of attributes $\alpha,\beta$ and corresponding counterfactual conditional $(\alpha \land \beta) \mcf_a \mathit{offer}$ we choose, there will always be an observation-equivalent trace such that the counterfactual does not hold (cf.~Subfigure~\ref{subfig:sem2} for the semantics of the counterfactual conditional). For example, assume we pick the pair $a_\mathit{sales}$ and $a_\mathit{f}$, i.e., attributes for Applicant that match the preference of Recruiter on the second position of trace $\pi$. The counterfactual conditional $(a_\mathit{sales} \land a_\mathit{f}) \mcf_a \mathit{offer}$ does in fact hold on $\pi$ at the second position, since there is the (unique) closest trace satisfying $a_\mathit{sales} \land a_\mathit{f}$ that also satisfies $\mathit{offer}$, namely:
\begin{align*}
\pi' = \{\}\{a_\mathit{sales}, a_\mathit{f},r_\mathit{sales},r_\mathit{f}, \mathit{offer}\} \{\}^\omega \enspace .
\end{align*}
However, there also exists an observation-equivalent trace such that the same counterfactual conditional does not hold. This is a trace where Recruiter picks a different preference at the second position. Since Recruiter's preference is unobservable by Applicant, this yields the following observation-equivalent trace
\begin{align*}
\pi'' = \{\}\{a_\mathit{it},a_\mathit{f},r_\mathit{accounting},r_\mathit{f}\} \{\}^\omega \enspace ,
\end{align*}
 that does not satisfy $(a_\mathit{sales} \land a_\mathit{f}) \mcf_a \mathit{offer}$, since the (unique) closest trace satisfying $a_\mathit{sales} \land a_\mathit{f}$ is:
 \begin{align*}
 \pi''' = \{\}\{a_\mathit{sales},a_\mathit{f},r_\mathit{accounting},r_\mathit{f}\} \{\}^\omega \enspace ,
 \end{align*}
 where $\mathit{offer}$ does not hold at the second position. The crux now is that in the unexplainable system we can find such an observation-equivalent trace for any counterfactual conditional in ICE's formula, since the preference of Recruiter is not observable by Applicant, and hence may be modified freely in observation-equivalent prefixes. In contrast, the same does not work in the explainable hiring system $\ekripke$ since Applicant can observe Recruiters preference and, hence, observation-equivalent traces are restricted to have the same preferences picked by Recruiter as in $\pi$. In particular, this means that $\pi''$ is not an observation-equivalent trace with respect to $\pi$ in the explainable system.
 
 \subsection{Flavors of Explainability}
 
 We have seen at the example of ICE how our logic uses the formalisms of counterfactual, epistemic and temporal logic to express a certain explainability requirement. Yet there are other conceivable notions of counterfactual explainability that can be specified in this logic. 
 
 \subsubsection{Weak Counterfactual Explainability} It may, for instance, not be necessary that Applicant knows the exact attributes which would have resulted in an offer, but instead only that there were some attributes that would have let to an offer. This is specified by the following formula:
\begin{align*}
 	\G \Big( \lnot \mathit{offer} \rightarrow  \K_a \left( \big(\bigvee_{\alpha,\beta \in \mathit{Att}(a)} (\alpha \land \beta) \big)\mcf_a \mathit{offer}\right)\Big) \enspace ,
\end{align*}
 which we term \emph{Weak Counterfactual Explainability} (WCE). Based on the semantics of the knowledge operator, it is easy to see that ICE is the strictly stronger requirement. This yields the following proposition.
 \medskip
 \begin{proposition}
 	ICE is strictly stronger than WCE, i.e., the models of ICE are a strict subset of WCE's models: $\mods(\mathit{ICE}) \subset \mods(\mathit{WCE})$.
 \end{proposition}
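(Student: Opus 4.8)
The plan is to establish the two set-relations separately: the inclusion $\mods(\mathit{ICE}) \subseteq \mods(\mathit{WCE})$ and its properness $\mods(\mathit{ICE}) \neq \mods(\mathit{WCE})$. Since ICE and WCE share the identical temporal skeleton $\G(\lnot\mathit{offer} \rightarrow \cdot)$ and system satisfaction quantifies universally over all traces, the inclusion reduces to a purely local claim: on every trace $\pi$ and every position $i$ with $\mathit{offer} \notin \pi[i]$, the ICE-consequent $\bigvee_{\alpha,\beta \in \mathit{Att}(a)} \K_a(C_{\alpha,\beta})$ entails the WCE-consequent, where I abbreviate $C_{\alpha,\beta} := (\alpha \land \beta) \mcf_a \mathit{offer}$. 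It thus suffices to prove this single entailment between the two epistemic formulas at an arbitrary index of an arbitrary extended Kripke structure.

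For the inclusion I would rely on the monotonicity of the knowledge modality. From the clause $\ekripke,\pi,i \models \K_a\,\varphi$ iff $\varphi$ holds at index $i$ on every observation-equivalent trace, one reads off directly that $\K_a$ is monotone for entailment: if $\varphi \models \psi$ on all traces and positions, then $\K_a\,\varphi \models \K_a\,\psi$. Now assume the ICE-consequent holds at $(\pi,i)$; then there is a concrete witness pair $(\alpha^\ast,\beta^\ast)$ with $\ekripke,\pi,i \models \K_a(C_{\alpha^\ast,\beta^\ast})$. As $C_{\alpha^\ast,\beta^\ast}$ is one of the disjuncts of $\bigvee_{\alpha,\beta} C_{\alpha,\beta}$, we have $C_{\alpha^\ast,\beta^\ast} \models \bigvee_{\alpha,\beta} C_{\alpha,\beta}$, and monotonicity of $\K_a$ yields $\ekripke,\pi,i \models \K_a\big(\bigvee_{\alpha,\beta} C_{\alpha,\beta}\big)$. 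This is the formal content behind the paper's remark that the inclusion follows from the semantics of the knowledge operator: knowing a concrete counterfactual witness entails knowing the weaker existential statement.

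I expect the main obstacle to lie in the interaction of the disjunction with the Lewisian might-operator $\mcf_a$, i.e., in reconciling the placement of $\bigvee_{\alpha,\beta}$ inside the antecedent, as in the WCE-consequent $\K_a\big((\bigvee_{\alpha,\beta}(\alpha\land\beta))\mcf_a\mathit{offer}\big)$, with the disjunction $\K_a(\bigvee_{\alpha,\beta} C_{\alpha,\beta})$ produced by the monotonicity argument. Because $\mcf_a$ is evaluated at the closest antecedent-traces under $\Sigma_a^\pi$, it is \emph{not} monotone under weakening of its antecedent: a trace satisfying a different disjunct may be strictly more similar yet fail $\mathit{offer}$, so the two forms are not interchangeable for free. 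Closing this gap requires a simplification-of-disjunctive-antecedents principle for the might-counterfactual; I would either verify that the given similarity semantics validates the needed direction, or equivalently argue that the intended existential reading of WCE is the disjunction $\bigvee_{\alpha,\beta} C_{\alpha,\beta}$ pulled outside the counterfactual, for which the monotonicity step closes immediately. The epistemic monotonicity itself is routine; this counterfactual bookkeeping is where the care is needed.

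For strictness it remains to exhibit one model in $\mods(\mathit{WCE}) \setminus \mods(\mathit{ICE})$, and the unexplainable hiring system $\univ$ from Section~\ref{sec:motivation} is the natural candidate. The in-depth example already establishes $\univ \nmodels \mathit{ICE}$: at a $\lnot\mathit{offer}$-position, for every fixed pair $(\alpha,\beta)$ one finds an observation-equivalent trace on which Recruiter's hidden preference differs and $C_{\alpha,\beta}$ fails, so $\K_a(C_{\alpha,\beta})$ fails for each disjunct and the ICE-consequent is violated. Conversely, on every trace and at every $\lnot\mathit{offer}$-position the combination matching Recruiter's hidden but fixed preference yields an offer after a minimal change, so $\bigvee_{\alpha,\beta} C_{\alpha,\beta}$ holds throughout each indistinguishability class; hence $\K_a(\bigvee_{\alpha,\beta} C_{\alpha,\beta})$ holds, and since the guard is vacuous at $\mathit{offer}$-positions, $\univ \models \mathit{WCE}$. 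Together with the inclusion this gives $\mods(\mathit{ICE}) \subset \mods(\mathit{WCE})$.
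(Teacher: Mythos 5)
The paper itself gives no detailed argument for this proposition (it is dispatched with ``based on the semantics of the knowledge operator, it is easy to see\ldots''), so the relevant question is whether your reconstruction is sound. Your overall decomposition --- reduce the inclusion to a pointwise entailment between the two consequents via monotonicity of $\K_a$, then witness strictness with the unexplainable hiring system $\univ$ --- is exactly the intended argument, and your strictness witness is the one the paper's Section~3.2 analysis supports: $\univ \nmodels \mathit{ICE}$ because every fixed pair $(\alpha,\beta)$ has an observation-equivalent refuting trace, while the pair matching Recruiter's hidden preference makes the disjunction of counterfactuals true on every trace in each indistinguishability class.

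The obstacle you flag, however, is not mere bookkeeping --- it is a genuine gap, and of your two proposed escape routes only the second one works. Your monotonicity argument delivers $\K_a\bigl(\bigvee_{\alpha,\beta} C_{\alpha,\beta}\bigr)$, whereas WCE as typeset asserts $\K_a\bigl((\bigvee_{\alpha,\beta}(\alpha\land\beta))\mcf_a\mathit{offer}\bigr)$, and the step from a single true disjunct $C_{\alpha^\ast,\beta^\ast}$ to the disjunctive-antecedent counterfactual is invalid under the subset-based similarity relation: weakening the antecedent of $\mcf_a$ introduces new minimal antecedent-traces. Concretely, if Recruiter prefers sales/f and Applicant applied with it/f, then $(a_{\mathit{sales}}\land a_{\mathit{f}})\mcf_a\mathit{offer}$ holds, but the trace flipping only Applicant's gender satisfies the weakened antecedent, and no trace with a subset of that single change yields $\mathit{offer}$; so the disjunctive-antecedent counterfactual fails and the literal inclusion $\mods(\mathit{ICE})\subseteq\mods(\mathit{WCE})$ would fail too. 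So you should commit to the reading in which the disjunction scopes over whole counterfactuals, $\K_a\bigl(\bigvee_{\alpha,\beta} C_{\alpha,\beta}\bigr)$ --- which matches the paper's informal gloss that Applicant need only know that \emph{some} attributes would have led to an offer --- and note that the displayed WCE formula does not validate the proposition as written. With that reading fixed, both your inclusion and your strictness argument go through.
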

 
  \subsubsection{External Counterfactual Explainability} Both ICE and WCE require that Applicant is by themselves able to bring about the consequent of the counterfactual, and this is indeed the case in both the explainable and unexplainable hiring system presented in Section~\ref{subsec:hiring}. This can be used to formalize \emph{actionable} counterfactual explanations~\cite{PoyiadziSSBF20}, i.e., counterfactual explanations that range over only attributes under the control of the agent receiving the explanation. However, consider an alteration of the explainable hiring system where Applicant cannot obtain the qualifications for accounting, while this may still be Recruiter's preference. Hence, formally we modify the state space to obtain the modified Kripke structure $\kripke'$ follows:
  \begin{align*}
  	S' = \{(a_{\mathit{job}},a_{\mathit{gen}},r_{\mathit{job}},r_{\mathit{gen}}) \mid & \, r_{\mathit{job}} \in \{\mathit{accounting},\mathit{sales},\mathit{it}\} \land
  	 a_{\mathit{job}} \in \{\mathit{sales},\mathit{it}\}\\ &\land \; a_{\mathit{gen}},r_{\mathit{gen}} \in \{\mathit{m},\mathit{f}\} \} \cup \{s_0\} \enspace .
  \end{align*}
 The resulting hiring system $\mathcal{E}' = (\kripke',\Omega^\ekripke)$ does not satisfy ICE, as, e.g., none of the counterfactuals in the formula hold at the second position of $\pi''$ as defined in Section~\ref{subsec:example_traces}. This is because only Recruiter can induce the necessary change by changing their preference. Since Recruiter's preference is observable to Applicant, it may still be reasonable to include explanations that Applicant can deduce from these observations, but may be out of their control, i.e., \emph{external} explanations. This yields the following criterion which we term \emph{General Counterfactual Explainability} (GCE), which encompasses both internal and external explanations:
\begin{align*}
	\G \Big( \lnot \mathit{offer} \rightarrow  \big( \bigvee_{\alpha,\beta \in \mathit{Att}(a,r)} \K_a \left((\alpha \land \beta) \mcf_a \mathit{offer}\right)\big)\Big) \enspace ,
\end{align*}
where $\mathit{Att}(a,r) = \mathit{Att}^*(a)\cup\mathit{Att}^*(r)$.
Since the subformulas in the central disjunction of GCE subsume the ones present in ICE, it is again easy to deduce that the former is a strict relaxation of the latter. The strictness is witnessed by the modified hiring system $\mathcal{E}'$ discussed before.
\medskip
 \begin{proposition}
	ICE is strictly stronger than GCE.
\end{proposition}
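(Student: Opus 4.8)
The plan is to establish the two inclusions separately: first the containment $\mods(\mathit{ICE}) \subseteq \mods(\mathit{GCE})$, and then its strictness by exhibiting a model that separates the two requirements.

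For the containment I would argue purely at the level of the two central disjunctions. Since ICE ranges over $\mathit{Att}(a)$ while GCE ranges over the larger index set $\mathit{Att}(a,r) = \mathit{Att}^*(a)\cup\mathit{Att}^*(r) \supseteq \mathit{Att}(a)$, every pair $(\alpha,\beta)$ occurring in ICE's disjunction also occurs in GCE's disjunction, so each disjunct $\K_a((\alpha \land \beta) \mcf_a \mathit{offer})$ of ICE literally appears among the disjuncts of GCE. Consequently, at any extended Kripke structure $\ekripke$, trace $\pi$, and position $i$ with $\ekripke,\pi,i \models \lnot \mathit{offer}$, satisfaction of ICE's consequent (some ICE-disjunct holds) immediately entails satisfaction of GCE's consequent (the same disjunct holds in the larger disjunction). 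Wrapping this pointwise implication under the common outer $\G$ and quantifying universally over all traces yields $\ekripke \models \mathit{ICE} \Rightarrow \ekripke \models \mathit{GCE}$, hence the inclusion.

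For strictness I would use the modified hiring system $\mathcal{E}' = (\kripke',\Omega^\ekripke)$ as the separating witness, showing $\mathcal{E}' \in \mods(\mathit{GCE}) \setminus \mods(\mathit{ICE})$. The non-membership $\mathcal{E}' \notin \mods(\mathit{ICE})$ is the one already sketched above: on the trace $\pi''$ at its second position $\mathit{offer}$ fails, yet for every $\alpha,\beta \in \mathit{Att}(a)$ the antecedent $\alpha \land \beta$ constrains only Applicant's attributes, so the $\Sigma(a)$-closest antecedent-traces leave Recruiter at $r_\mathit{accounting}$; since $\kripke'$ forbids $a_\mathit{accounting}$, none of them can match Recruiter's preference and thus none satisfies $\mathit{offer}$, so every internal counterfactual fails, and with it its $\K_a$-guarded form (witnessed by $\pi''$ itself as an observation-equivalent trace). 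The membership $\mathcal{E}' \in \mods(\mathit{GCE})$ is where the real work lies: I would show that the external explanation obtained by aligning Recruiter's preference with Applicant's current choice is always available and always known. Concretely, at any offer-free position $i$ of any trace of $\mathcal{E}'$, where Applicant has $a_x,a_y$ and Recruiter has a differing $r_v,r_w$, I would pick $\alpha = r_x$ and $\beta = r_y$ from $\mathit{Att}^*(r)$; the $\Sigma(a)$-minimal change enforcing $r_x \land r_y$ only flips Recruiter's mismatched coordinates to match Applicant, yielding a state that is legal in $\kripke'$ (Recruiter's job range is unrestricted) and labeled with $\mathit{offer}$, so the counterfactual holds. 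Because $\Omega^\ekripke$ makes all of $\AP$ observable, every trace observation-equivalent to the reference up to $i$ carries identical labels on $[0,i]$, hence the same configuration at $i$ and the same minimal-change analysis, so $\K_a((r_x \land r_y) \mcf_a \mathit{offer})$ holds, supplying a true GCE-disjunct at every offer-free position.

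The main obstacle I anticipate is precisely the $\mathcal{E}' \models \mathit{GCE}$ direction, because it is a genuine universal statement over all traces and all offer-free positions rather than a single counterexample: one must verify that the ``match Recruiter to Applicant'' explanation is simultaneously (i) realizable as a $\Sigma(a)$-closest antecedent-trace, (ii) landing in a state that survives the restriction to $S'$, and (iii) robust across all observation-equivalent traces so that the knowledge modality is discharged. Point (i) deserves the most care, since closeness depends on the full past and future of each reference trace through the $\G$ and $\G^-$ in $\Sigma(a)$; I would argue that flipping only the mismatched Recruiter coordinates at position $i$ introduces an inclusion-minimal change set and directly produces $\mathit{offer}$, so no strictly closer antecedent-trace can defeat the ``might'' counterfactual.
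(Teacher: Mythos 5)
Your proposal is correct and follows essentially the same route as the paper: the containment $\mods(\mathit{ICE}) \subseteq \mods(\mathit{GCE})$ comes from the fact that GCE's central disjunction ranges over a superset of ICE's disjuncts, and strictness is witnessed by the modified hiring system $\mathcal{E}'$, which fails ICE at the second position of $\pi''$ but satisfies GCE via the external explanation that aligns Recruiter's preference with Applicant's actual attributes. You work out the $\mathcal{E}' \models \mathit{GCE}$ direction in considerably more detail than the paper, which leaves that verification implicit, but the underlying argument is the same.
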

\medskip

In this section, we have seen how our logic allows to formalize certain intricacies of different notions of explainability that pertain to questions such as: \emph{Does an agent know which exact actions explain some observed outcome? And are these actions solely under the control of the agent, or dependent on other agents, too?} The logic provides an ideal basis to formalize these intricacies and construct a taxonomy of explainability that discriminates between, e.g., weak and internal explainability. In the following sections we introduce more such notions of explainability based on our logic.

\section{A Tri-Modal Logic for Explainability}\label{sec:yltl}

 We now outline the formal semantics of the logic. We will use the shorthand \yltl to refer to the logic, which stands for \emph{whY Linear-time Temporal Logic}. The structure of the Y also represents that the logic is a fusion of three modal logics. First, we present the syntax and semantics of \yltl. Afterward, we will study the model-checking problem of \yltl. We then outline a decision procedure for finite-state model checking based on translating \yltl formulas into \foe.

\subsection{Syntax}

\yltl is an extension of \kltl with the original counterfactual operators~\cite{Lewis73} and counterfactuals for non-total similarity relations \cite{FinkbeinerS23}. This yields the following syntax for our logic \yltl:
\begin{align*}
	\varphi \Coloneqq~&p \mid \neg \varphi \mid \varphi \land \varphi \mid \LTLnext \varphi \mid \varphi \U \varphi \mid \K_a \, \varphi \mid \tag{\kltl}\\
	&\LTLnext^- \varphi \mid\LTLuntil^- \varphi \mid \tag{past-operators}\\
	&\varphi \cf_a \varphi \mid \varphi \ucf_a \varphi \tag{counterfactuals}
\end{align*}
where again $p \in \AP$ is an atomic proposition and $a \in \ags$ is an agent. \yltl inherits all of the derived operators of \kltl with past operators, as well as the counterfactual operators `Might' ($\varphi \mcf_a \psi \equiv \lnot (\varphi \cf_a \lnot \psi)$), a dual to `Would', and `Existential Might', a dual to `Universal Would' ($\varphi \emcf_a \psi \equiv \lnot (\varphi \ucf_a \lnot \psi)$).

We use Lewis' counterfactuals as predicates for causal reasoning because they are a common basis for a wide array of counterfactual causality definitions. While a more refined notion of causal predicates may be desirable, the literature is still divided on what refined notion generalizes to more than a few examples. Further, refined notions, such as actual causality, can often be encoded with counterfactuals~\cite{FinkbeinerS23}. We, therefore, hypothesize that an agent's desired explanation can always be expressed by Boolean combination of counterfactual dependencies with respect to the agent's similarity relation, which is covered by our logic. For instance, actual causality combines counterfactual reasoning with a minimality criterion~\cite{HalpernP05a}. We can require minimality of the counterfactual antecedent by enumerating all subformulas, i.e., for a specific antecedent $\alpha \land \beta$ we can extend the formula $(\alpha \land \beta) \mcf_a \mathit{offer}$ in ICE to:
$$ (\alpha \land \beta) \mcf_a \mathit{offer} \land \lnot (\alpha  \mcf_a \mathit{offer} ) \land \lnot (\beta  \mcf_a \mathit{offer} ) \enspace .$$

\subsection{Semantics}\label{subsec:semantics}

\yltl inherits the semantics of all shared operators from \kltl, such that we only need to define the semantics of the past-operators and counterfactuals. Since the semantics of counterfactuals rely on a similarity-based analysis, we need to extend the extended Kripke structures of \kltl further to accommodate for the agent's similarity relations. Hence, the semantics of a \yltl formula is defined with respect to an similarity-extended Kripke structure $\ekripke^+ = (\kripke,\Omega,\Sigma)$. Here, $\Sigma$ denotes the \emph{similarity map} $\Sigma : \ags \mapsto (\Pi \times \Pi \times \Pi \mapsto \mathcal{L}_{\text{\kltl}}(\AP \times \Pi))$ which provides a relational \kltl formula ranging over pairs of atomic propositions $\AP$ and trace variables $\Pi$.

\subsubsection{Similarity Map} The similarity map $\Sigma$ defines the similarity relations of the different agents, each with a (relational) \kltl formula. We first define the \emph{zipped trace} $z(\pi_1,\pi_2,\pi_3)$ of three traces $\pi_{1,2,3} \in (2^\AP)^\omega$ as follows for all $i \in \mathbb{N}$:
\begin{align*}
&z(\pi_1,\pi_2,\pi_3)[i] = \{(a,\pi_k) \in \AP \times \Pi \mid a \in \pi_k[i]\} \enspace .
\end{align*}
The zipped trace simply fuses the three traces together while enriching the atomic propositions with the information on which trace they originate from. This now allows us to evaluate the formula obtained from the similarity map on the zipped trace, as a way to characterize the underlying similarity relation. We denote the similarity relation of some agent $a \in \ags$ as $\Sigma_a$, and define it as:
$$ \Sigma_a = \{(\pi_1,\pi_2,\pi_3) \mid z(\pi_1,\pi_2,\pi_3) \models \Sigma(a)(\pi_1,\pi_2,\pi_3) \} \enspace .$$
Hence, three traces are related in the similarity relation of agent $a$ if and only if their zipped trace satisfies the formula specified by the similarity map for agent $a$. We require the similarity relation to satisfy some assumptions, which we specify for the two place relation $\Sigma_a^\pi = \{ (\pi_1,\pi_2) \mid (\pi,\pi_1,\pi_2) \in \Sigma_a  \}$ as in Lewis' original work~\cite{Lewis73}. Crucially, we allow $\Sigma_a^\pi$ to be non-total like in our recent extension~\cite{FinkbeinerS23}. With this, we ensure that subset-based similarity relations like the one described in Section~\ref{sec:motivation} and used, e.g., for actual causality~\cite{HalpernP05a}, can be handled by our logic. We require $\Sigma_a^\pi$ to be a preorder with $\pi$ as a minimum: $\forall \pi' : (\pi,\pi') \not\in \Sigma_a^\pi \rightarrow (\pi',\pi) \not\in \Sigma_a^\pi$, i.e., if a trace is not at least as far from $\pi$ as $\pi$ itself, it is not related to $\pi$ in $\Sigma_a^\pi$, and hence \emph{inaccessible}. We can also use the similarity relation to encode Lewis' notion of inaccessibility by simply not relating inaccessible traces, as we have relaxed it to a non-total relation which allows such non-ordered pairs. In the following section we will outline the consequences this relaxation has on the semantics of the counterfactual operators, and how to alleviate these with two additional operators with modified semantics.

\subsubsection{Counterfactuals} We can now proceed to specify the semantics of the counterfactual operators, for which we apply a similarity-based analysis~\cite{Lewis73}. Lewis defines counterfactuals as variably strict conditionals, which in multi-agent systems we interpret to mean that to hold on a specific trace, the consequent needs to hold in the closest accessible traces satisfying the antecedent. 
This now standard semantic treatment of counterfactuals in particular means that they cannot be expressed by a universal modality combined with a conditional, i.e., as Lewis argues, the semantics cannot be expressed with the usual universal modal operator. In our setting, this means that their semantics cannot be modeled with a knowledge operator and a conditional, i.e., $\K_a (\varphi_1 \rightarrow \varphi_2)$ is not equivalent to $\varphi_1 \cf \varphi_2$. Instead we define these counterfactuals in accordance with Lewis' original modal treatment. This results in the following semantics for a similarity-extended Kripke structure $\ekripke^+ = (\kripke,\Omega,\Sigma)$, an initial trace $\pi \in \Pi(\kripke)$, and a position $i$:
\begin{alignat*}{2}
	\ekripke^+,\pi,i \models \varphi_1\cf_a \varphi_2	~\text{ iff } &(1) \; \forall \pi' \in \Pi(\kripke): (\pi,\pi') \in \Sigma_a^\pi \rightarrow  \ekripke^+,\pi',i \nmodels \varphi_1 \; \lor\\ &(2) \; \exists \pi' \in \Pi(\kripke): (\pi,\pi') \in \Sigma_a^\pi \land \ekripke^+,\pi',i \models \varphi_1 \; \land\\ &\forall  \pi'' \in \Pi(\kripke): (\pi'',\pi') \in \Sigma_a^\pi \rightarrow \ekripke^+,\pi'',i \models (\varphi_1 \rightarrow \varphi_2) \enspace.
\end{alignat*}
Condition (1) represents a vacuity condition such that the `Would' counterfactual holds on a trace $\pi$ if there are no accessible traces where the antecedent $\varphi_1$ holds. It is easy to see how the quantification is restricted to traces that are related to $\pi$ in the similarity relation, i.e., traces that are accessible from $\pi$, through the implication after each universal quantifier and the conjunction after the existential quantifier. Condition (2), in principle, encodes the idea that on all closest counterfactual traces where $\varphi_1$ holds, $\varphi_2$ holds as well. 

\paragraph*{Infinite Chains of Closer Traces.} The complex nested quantification comes into play when there is no unique closest trace for some antecedent, as illustrated in the following example with an infinitely descending chain of progressively more similar traces. 
\medskip
\begin{example}\label{ex:limit}
	Consider the same similarity relation as used in the example from Section\label{ex:chains}~\ref{sec:motivation}, where $A = \mathit{Att}^+(a) \cup \mathit{Att}^+(r)$: 
    \begin{align*}
	\Sigma(a)(\pi,\pi_1,\pi_2) = \, &\LTLglobally \Big(\!\!\bigwedge_{p \in \A} \!\!\!\big((p,\pi) \not\leftrightarrow (p,\pi_1)\big) \rightarrow  \big((p,\pi) \not\leftrightarrow (p,\pi_2)\big)\Big) \, \land \\
	&\LTLglobally^- \Big(\!\!\bigwedge_{p \in \A} \!\!\!\big((p,\pi) \not\leftrightarrow (p,\pi_1)\big) \rightarrow  \big((p,\pi) \not\leftrightarrow (p,\pi_2)\big)\Big) \enspace ,
\end{align*} and the trace $\{p\}^\omega$. We are interested in the counterfactual $(\lnot \LTLglobally \LTLeventually p) \cf_a \top$, in a structure that contains all traces over the alphabet $\{p\}$. In this situation, we have an infinite chain of traces that satisfy $(\lnot \LTLglobally \LTLeventually p)$, i.e., $\{\}^\omega$, $\{p\}\{\}^\omega$, $\{p\}\{p\}\{\}^\omega$, etc. Hence, we cannot evaluate the consequent in a particular unique closest counterfactual trace, but instead need to make use of Lewis' elegant semantics for counterfactuals without the so-called limit assumption: We are looking for an accessible threshold trace $\pi'$, such that all at least as close traces $\pi''$ that satisfy the antecedent also satisfy the consequent.
\end{example} 
\medskip
\begin{figure}[t]
    \centering
    \begin{subfigure}{.49\textwidth}
        \centering
    \begin{tikzpicture}[auto,
			node distance=0.35 and 1,every state/.style={minimum size=2pt,inner sep=1pt,fill},
            square/.style={regular polygon,minimum size=0.65cm,regular polygon sides=4}]

        \node[draw,rectangle,
                minimum width =2.5cm, 
                minimum height = 0.485cm,pattern=north east lines, pattern color=red,draw=red](s) at (-2.7,0) {};
                
        \node[draw,rectangle,
                minimum width =1.3cm, 
                minimum height = 0.485cm,pattern=north west lines, pattern color=blue,draw=blue](s) at (-2.1,0) {};

        \node[state](2){};
        \node[below = of 2](1){$\pi$};
        \node[above = of 2](3){};
        \path[->,draw] (1) edge (2)
        (2) edge (3);

        \node[state, left = of 2](2a){};
        \node[below = of 2a](1a){};
        \node[above = of 2a](3a){};
        \path[->,draw] (1a) edge (2a)
        (2a) edge (3a);
        
        \node[left = 0.25 of 2a](2b){\dots};
        
        \node[state, left = 0.25 of 2b](2c){};
        \node[below = of 2c](1c){};
        \node[above = of 2c](3c){};
        \path[->,draw] (1c) edge (2c)
        (2c) edge (3c);
        
        \node[state, left = of 2c](2d){};
        \node[below = of 2d](1d){};
        \node[above = of 2d](3d){};
        \path[->,draw] (1d) edge (2d)
        (2d) edge (3d);
        
        \node[state, left = of 2d](2e){};
        \node[below = of 2e](1e){};
        \node[above = of 2e](3e){};
        \path[->,draw] (1e) edge (2e)
        (2e) edge (3e);
    
        \node[draw,square,very thick](s) at (2) {};
        \node[below = 0.3 of 2b](x){$\infty$};
        \node[left = 0.3 of x]{$\geq_{\Sigma_a^\pi}$};
        \node[above = 0.3 of 2b]{\textcolor{red}{$\exists$}\textcolor{blue}{$\forall$}};
    \end{tikzpicture}
    \caption{Semantics of the counterfactual \textcolor{red}{$\varphi$}$\,\cf_a$\textcolor{blue}{$\,\psi$}.}\label{subfig:cf1a}
    
    \begin{tikzpicture}[auto,
			node distance=0.35 and 1,every state/.style={minimum size=2pt,inner sep=1pt,fill},
            square/.style={regular polygon,minimum size=0.65cm,regular polygon sides=4}]

        \node[draw,rectangle,
                minimum width =2.5cm, 
                minimum height = 0.485cm,pattern=north east lines, pattern color=red,draw=red](s) at (-2.7,0) {};
                
        \node[draw,rectangle,
                minimum width =0.485cm, 
                minimum height = 0.485cm,pattern=north west lines, pattern color=blue,draw=blue](s) at (-2.44,0) {};
                
        \node[draw,rectangle,
                minimum width =0.29cm, 
                minimum height = 0.485cm,pattern=north west lines, pattern color=blue,draw=blue](s) at (-1.59,0) {};
    
        \node[state](2){};
        \node[below = of 2](1){$\rho$};
        \node[above = of 2](3){};
        \path[->,draw] (1) edge (2)
        (2) edge (3);

        \node[state, left = of 2](2a){};
        \node[below = of 2a](1a){};
        \node[above = of 2a](3a){};
        \path[->,draw] (1a) edge (2a)
        (2a) edge (3a);
        
        \node[left = 0.25 of 2a](2b){\dots};
        
        \node[state, left = 0.25 of 2b](2c){};
        \node[below = of 2c](1c){};
        \node[above = of 2c](3c){};
        \path[->,draw] (1c) edge (2c)
        (2c) edge (3c);
        
        \node[state, left = of 2c](2d){};
        \node[below = of 2d](1d){};
        \node[above = of 2d](3d){};
        \path[->,draw] (1d) edge (2d)
        (2d) edge (3d);
        
        \node[state, left = of 2d](2e){};
        \node[below = of 2e](1e){};
        \node[above = of 2e](3e){};
        \path[->,draw] (1e) edge (2e)
        (2e) edge (3e);
    
        \node[draw,square,very thick](s) at (2) {};
        \node[below = 0.3 of 2b](x){$\infty$};
        \node[left = 0.3 of x]{$\geq_{\Sigma_a^\rho}$};
        \node[above = 0.3 of 2b]{\textcolor{red}{$\forall$}\textcolor{blue}{$\exists$}};
    \end{tikzpicture}
    \caption{Semantics of the counterfactual \textcolor{red}{$\varphi$}$\,\mcf_a$\textcolor{blue}{$\,\psi$}.}
    \label{subfig:cf1b}
    \end{subfigure}
    \begin{subfigure}{.49\textwidth}
        
    \centering
    \begin{tikzpicture}[auto,
			node distance=1 and 1,every state/.style={minimum size=2pt,inner sep=3pt},
            square/.style={regular polygon,minimum size=0.65cm,regular polygon sides=4}]
    
        \node[draw,rectangle,
                minimum width =4.3cm, 
                minimum height = 2.8cm,pattern=north east lines, pattern color=red,draw=red](s) at (0,2.2) {};
        \draw [draw=blue,pattern=north west lines,pattern color =blue]
       (-2.15,0.8) -- (-2.15,3.6) -- (2.06,3.6) -- (-0.74,0.8) -- cycle;
    \node[thick,state,fill=white](1){$\pi$};
    \node[thick,state,fill=white, above left = 1 and 1 of 1](2){$\rho$};
    \node[thick,state,fill=white, above right = 1 and 1 of 1](3){$\sigma$};
    \node[thick,state,fill=white, above right = 1 and 1 of 2](4){$\gamma$};
    \draw [thick,arrows = {-Stealth[inset=0pt, angle=90:7pt]}] (1) edge node[pos=0.1,rotate=-45] {$\geq_{\Sigma_a^\pi}$} (2);
    \draw [thick,arrows = {-Stealth[inset=0pt, angle=90:7pt]}] (1) edge node[swap,pos=0.1,rotate=45] {$\leq_{\Sigma_a^\pi}$} (3);
    \draw [thick,arrows = {-Stealth[inset=0pt, angle=90:7pt]}] (3) edge node[swap,pos=0.1,rotate=45] {} (4);
    \draw [thick,arrows = {-Stealth[inset=0pt, angle=90:7pt]}] (2) edge node[swap,pos=0.1,rotate=45] {} (4);
    \end{tikzpicture}
    \vspace{0.5em}
    \caption{Non-total similarity relation.}
    \label{subfig:cf2}
    \end{subfigure}
    \caption{Lewis' original semantics for the counterfactuals $\cf_a$ and $\mcf_a$ are illustrated in Subfigures~\ref{subfig:cf1a} and~\ref{subfig:cf1a}, respectively. Arrows and point depict traces that are ordered in ascending similarity to $\pi$ and $\rho$, respectively, according to the similarity relation $\geq_{\Sigma_a}$. Subfigure~\ref{subfig:cf2} highlights problems when evaluating the counterfactual \textcolor{red}{$\varphi$}$\,\cf_a$\textcolor{blue}{$\,\psi$} in a non-total similarity relation. Here, circles represent full traces such as $\phi$ or $\gamma$, while arrows indicate that two traces are ordered by the similarity relation $\leq_{\Sigma_a^\pi}$. In all subfigures, areas with diagonal lines (colored red) indicate that the covered traces satisfy \textcolor{red}{$\varphi$}, while crossed lines (colored red and blue) indicate that the traces satisfy \textcolor{blue}{$\psi$}.}
    \label{fig:cf}
\end{figure}

Figure~\ref{fig:cf} abstractly illustrates these semantics on infinite chains of closer traces. In Subfigure~\ref{subfig:cf1a}, we can see how the counterfactual $\varphi \cf_a \psi$ requires a continuous chain of traces satisfying $\psi$ as soon as we move up the similarity relation from traces that satisfy $\lnot \varphi$ to traces that satisfy $\varphi$, starting from the reference trace $\pi$. This is realized through the $\exists\forall$-quantifier alternation that requires a trace satisfying $\varphi$ such that all closer traces satisfying $\varphi$ also satisfy $\psi$. In contrast, the counterfactual $\varphi \mcf_a \psi$ requires for all traces satisfying $\varphi$ at least one closer trace satisfying $\psi$ -- even on infinite chains. This is realized through a $\forall\exists$-quantifier alternation and depicted in Subfigure~\ref{subfig:cf1b}.

\paragraph*{Non-Total Similarity Relations.} Unlike in Lewis' original account, we allow a similarity relation $\Sigma_a^\pi$ of some agent $a$ to be non-total. As a consequence, Lewis' original semantics yield some rather unintuitive inferences~\cite{FinkbeinerS23}, which we illustrate in the following example.
\medskip
\begin{example}\label{ex:chains}
	Consider again the similarity relation as defined in Section~\ref{sec:motivation} and recalled in the previous Example\label{ex:limit} and the trace $\pi = \{\}^\omega$, with the counterfactual $(p \lor q) \cf_a p$, in a structure that contains all traces over the alphabet $\{p,q\}$. We depict $\pi$ and three other traces with their comparative similarity in Subfigure~\ref{subfig:cf2}. The counterfactual is satisfied by $\pi$, as we have the closest counterfactual trace $\rho = \{p\}\{\}^\omega$ as a witness for the existential quantifier in Condition~2 of the semantics of $\cf_a$. However, this does not match the intended semantics of the `Would' counterfactual. The counterfactual is supposed to express that the consequent $p$ holds on all closest counterfactual traces. However, there is the closest counterfactual trace $\sigma = \{q\}\{\}^\omega$ that does not satisfy $p$.
\end{example}
\medskip
The problem with Lewis' original semantics in non-total similarity relations is that the existential quantifier in Condition 2 implicitly also automatically quantifies existentially over the unrelated chains of at least as similar traces in the similarity relation. In Example~\ref{ex:chains}, these (in this case finite, but in general possibly infinite) chains are, on the one chain, the trace changing $p$ and, on the other chain, the trace changing $q$. These traces with single changes are incomparable with each other regarding their similarity to the reference trace $\{\}^\omega$, while the trace that changes both $p$ and $q$ is comparable to both of these traces that change only single atomic propositions (it is, of course, less similar to $\{\}^\omega$ than both of the traces). However, since the trace with the single changes already satisfy the antecedent of the counterfactual, they have to be considered as possible threshold traces for Lewis' criterion. However, the implicit existential quantification allows the semantics to ignore whole chains (in this case the chain with $\{q\}\{\}^\omega$), which then do not need a threshold trace satisfying Lewis' criterion. 

In earlier work~\cite{FinkbeinerS23}, we proposed an alternative counterfactual operator which we include in \yltl. The operator is called `Universal Would' counterfactual, because the semantics are based on universal quantification over the chains of the similarity relation as follows, again for a similarity-extended Kripke structure $\ekripke^+ = (\kripke,\Omega,\Sigma)$, an initial trace $\pi \in \Pi(\kripke)$, and a position $i$:
\begin{alignat*}{2}
	\ekripke^+,\pi,i \models \varphi_1\ucf_a \varphi_2	~\text{ iff } \, &(1) \; \forall \pi' \in \Pi(\kripke): (\pi,\pi') \in \Sigma_a^\pi \land \ekripke^+,\pi',i \models \varphi_1 \, \rightarrow\\ &(2) \; \exists \pi'' \in \Pi(\kripke):(\pi'',\pi') \in \Sigma_a^\pi \land\ekripke^+,\pi'',i \models \varphi_1 \, \land\\ &\forall  \pi''' \in \Pi(\kripke):(\pi''',\pi'') \in \Sigma_a^\pi \rightarrow \ekripke^+,\pi''',i \models (\varphi_1 \rightarrow \varphi_2) \enspace .
\end{alignat*}
Intuitively, this operator lifts Lewis' semantics of the `Would' operator to non-total similarity relations by applying it to every chain in the relation. This is achieved by prepending the non-vacuous condition of Lewis' definition for $\cf$, i.e., Condition~2, with a universal quantification that effectively quantifies over chains of traces (Condition~1). The semantics of Lewis' vacuity condition is then also directly captured by the initial universal quantification and the implication, such that we do not need the same disjunction as in Lewis' definition for $\cf$. For every chain with at least one counterfactual world not satisfying $\varphi_1$, the same requirement as posed by Lewis' original `Would' counterfactual has to hold: the threshold trace is $\pi''$ bound by the existential quantifier, and this threshold trace has to be found on the same chain that the universally quantified $\pi'$ is on. Consequently, there has to be a threshold trace on every chain containing a trace that satisfies $\varphi_1$ that is accessible from $\pi$. Local vacuity is still allowed, i.e., a whole chain without a single trace satisfying $\varphi_1$ does not need a closest trace satisfying $\varphi_2$. 
\medskip
\begin{example}
    To see how this semantics fixes the problem raised in Example~\ref{ex:chains}, consider again the trace $\{\}^\omega$, with the counterfactual $(p \lor q) \cf_a p$, in a structure that contains all traces over the alphabet $\{p,q\}$ and under the same similarity relation used in Section~\ref{sec:motivation} (with this new alphabet). The problem is that Lewis' existential quantifier allowed us to choose between the traces $\{p\}\{\}^\omega$ and $\{q\}\{\}^\omega$ as a witnessing counterfactual world. However, if we use the stronger `Universal Would' operator $(p \lor q) \ucf_a p$, the universal quantifier requires us to find a witnessing counterfactual operator on every chain. Since there is no at least as close trace $\pi''$ that satisfies $p$ for the trace $\{q\}\{\}^\omega$, we have that $(p \lor q) \ucf_a p$ is not satisfied on the trace $\{\}^\omega$ in this scenario. This is as desired, because $p$ does not hold on all of the closest traces satisfying $p \lor q$.
\end{example}

\subsubsection{Agent-Specific Similarity} A key feature of our counterfactuals is that their semantics are defined with respect to a \emph{specific} agent's similarity relation. This represents that agents may have different internal models about the causal workings of the system. For instance, in the hiring system described in Section~\ref{sec:motivation} it may be sensible that Applicant does not consider counterfactual scenarios where their gender attribute is different from the actual trace. An explanation based on such counterfactuals would not be actionable~\cite{PoyiadziSSBF20}, and may hence be undesired in many cases. Actionable explanations range only over actions and attributes that are fully under control of the agent receiving the explanations, which clearly is not the case for the gender attribute. Whether an antecedent is actionable or not highly depends on the scenario and the agent at hand, which motivates our flexible formalism of agent-specific similarity relations. With \yltl, such requirements can then be encoded by making, e.g., the traces with a modified atomic proposition $a_{gen}$ inaccessible from the reference trace in the similarity map $\Sigma'$ for agent $a$ as follows:
\begin{align*}
	\Sigma'(a)(\pi,\pi_1,\pi_2) = \Sigma(a)(\pi,\pi_1,\pi_2) &\land \LTLglobally^- (a_{gen},\pi) \leftrightarrow (a_{gen},\pi_1) \land (a_{gen},\pi) \leftrightarrow (a_{gen},\pi_2) \\
	&\land \, \LTLglobally (a_{gen},\pi) \leftrightarrow (a_{gen},\pi_1) \land (a_{gen},\pi) \leftrightarrow (a_{gen},\pi_2) \enspace .
\end{align*}
Here, we use the previous similarity relation $\Sigma(a)(\pi,\pi_1,\pi_2)$ defined in Section~\ref{subsec:hiring}. Recall that the idea of that relation was that the changes between the actual trace $\pi$ and the closer trace $\pi_1$ are a subset of the changes between the actual trace $\pi$ and the farther trace $\pi_2$. $\Sigma'(a)(\pi,\pi_1,\pi_2)$ now requires $a_{gen}$ to be the same on all three traces, which means traces of the system that change $a_{gen}$ are not related, hence inaccessible. This makes explainability specifications such as ICE harder to satisfy. 

Yet, other agents such as Recruiter may still consider counterfactual traces where the attribute $a_{gen}$ is modified, i.e., their similarity relation is $ \Sigma'(r) =  \Sigma(a)$. As a result we have that the explainable system $\mathcal{E} = (\kripke,\Omega^\ekripke,\Sigma')$ where all atomic propositions are observable by both agents does not satisfy ICE from the point of view of Applicant, but does satisfy ECE, i.e., \emph{External Counterfactual Explainability}, from the point of view of Recruiter, where ECE is formalized as follows:
\begin{align*}
	\G \Big( \lnot \mathit{offer} \rightarrow  \big( \bigvee_{\alpha,\beta \in \mathit{Att}(a)} \K_r \left((\alpha \land \beta) \mcf_r \mathit{offer}\right)\big)\Big) \enspace .
\end{align*}
Note that since both agents can observe the same atomic propositions, and hence $\K_a$ and $\K_r$ are in principle interchangeable, this difference is completely due to the fact that there are some $\alpha,\beta \in \mathit{Att}(a)$ for every position $i$ such that the counterfactual conditional $(\alpha \land \beta) \mcf_a \mathit{offer}$ holds, while this is not the case for the counterfactual $(\alpha \land \beta) \mcf_r \mathit{offer}$ that refers to the similarity relation of Recruiter.

\subsection{Model Checking}

In this section, we develop an approach to automatically verify whether a given system satisfies a \yltl specification. Our results apply to systems defined by finite similarity-extended Kripke structures. Under this assumption, we can then show the decidability of the \yltl model-checking problem by reducing it to model checking of an equivalent formula in Extended Monadic First-Order Logic (\foe). This is a decidable problem~\cite{FinkbeinerZ17,CoenenFHH19}, and we now outline this logic as a preliminary.

\paragraph{Extended Monadic First-Order Logic}
\foe is the monadic first-order logic of order (\fo) extended with the equal-level predicate~$E$~\cite{FinkbeinerZ17} for expressing hyperproperties~\cite{ClarksonS10}, i.e., properties that relate multiple executions of a system to one another. For a predefined set $V$ of first-order variables, the syntax of \foe is defined by the following grammar:
\begin{align*}
	\varphi &\Coloneqq \psi \mid \neg \varphi \mid \varphi \lor \varphi \mid \exists x .\ \varphi\\
	\psi &\Coloneqq P_p(x) \mid x < y \mid x = y \mid E(x,y) \enspace , 
\end{align*}
where $p \in \AP$ is an atomic proposition and $x,y \in V$ are first-order variables. An \foe formula is \emph{closed} when all variables are bound by a quantifier. \foe formulas are interpreted over a set of traces~$\Pi$. The first-order variables range over the domain $\Pi \times \mathbb{N}$. The order~$<$ is now only interpreted over variables referring to the same trace: $< \Coloneqq \{((\pi,n_1),(\pi,n_2)) \in (\Pi \times \mathbb{N})^2 \mid n_1 < n_2\}$. The equal-level predicate holds if two variables refer to the same position in (possibly) two different traces: $E \Coloneqq \{((\pi_1,n),(\pi_2,n)) \in (\Pi \times \mathbb{N})^2\}$. The predicate $P_p$ encodes the truth-value of atomic propositions: $P_p \Coloneqq \{(\pi,n) \mid p \in \pi[n]) \}$. We say that a closed \foe formula $\varphi$ is satisfied by an extended Kripke structure $\ekripke$, denoted by $\ekripke \models \varphi$, iff $\varphi$ interpreted over $\Pi(\kripke)$ is true.

We now outline our result on \yltl model checking. This utilizes a translation function $\tofo$ described in the proof of the following Lemma. The translation mirrors the idea used to translate \ltl into first-order logic defined in Kamp's seminal theorem~\cite{Kamp68}, which we extend for the knowledge operator and the counterfactuals, and for this we use the equal-level predicate provided by \foe.
\medskip
\begin{lemma}\label{lem:yltltofoe}
	For every \yltl formula $\varphi$ there exists a formula in \foe $\varphi'$ that characterizes the same set of models, i.e., such that $\mods(\varphi) = \mods(\varphi')$.
\end{lemma}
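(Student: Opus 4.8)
The plan is to define the translation function $\tofo$ by structural induction on the \yltl formula, carrying along a single free first-order variable $x$ that denotes the current trace-position pair $(\pi,i)$. Concretely, I will set up $\tofo(\varphi,x)$ so that the invariant
$$\ekripke^+,\pi,i \models \varphi \quad\Longleftrightarrow\quad \Pi(\kripke),[x \mapsto (\pi,i)] \models \tofo(\varphi,x)$$
holds for every similarity-extended Kripke structure $\ekripke^+=(\kripke,\Omega,\Sigma)$, trace $\pi$, and position $i$, where the right-hand side is the \foe interpretation over $\Pi(\kripke)$. The translation is parameterized by the observation map $\Omega$ and the similarity map $\Sigma$, which are fixed syntactic data shared by all structures in $\mods(\varphi)$; once the invariant is established, the closed sentence $\varphi' = \forall x\,\big(\mathit{init}(x) \rightarrow \tofo(\varphi,x)\big)$, with $\mathit{init}(x) \equiv \lnot\exists z\,(z<x)$ selecting the initial position of each trace, captures system-level satisfaction and therefore yields $\mods(\varphi)=\mods(\varphi')$.

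The Boolean, next, until, before, and since cases are exactly Kamp's translation: since the order $<$ of \foe only relates positions on the \emph{same} trace, the future and past temporal operators translate into bounded quantification over $<$ without ever leaving the trace that $x$ lives on (for instance $\tofo(\varphi_1 \U \varphi_2,x)$ asserts the existence of a $y$ with $x\le y$ satisfying $\tofo(\varphi_2,y)$ and $\tofo(\varphi_1,z)$ for all $x\le z<y$). The genuinely new ingredient in the epistemic case is the equal-level predicate $E$: I translate $\K_a\varphi$ as a universal quantification over all $y$ with $E(x,y)$, i.e.\ all traces taken at the same level $i$, guarded by a formula $\mathit{obseq}_a(x,y)$ expressing prefix observation-equivalence, namely $\forall x'\forall y'\,\big(x'\le x \land E(x',y') \rightarrow \bigwedge_{p\in\Omega(a)}(P_p(x')\leftrightarrow P_p(y'))\big)$, which is precisely the synchronous perfect-recall condition $\Omega_a(\pi)[0,i]=\Omega_a(\pi')[0,i]$.

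The main obstacle is the counterfactual cases $\cf_a$ and $\ucf_a$, because the accessibility and relative-similarity relation $\Sigma_a^\pi$ is itself defined by a \emph{relational} \kltl formula $\Sigma(a)$ evaluated over the zipped trace of a triple. My plan is to introduce an auxiliary \emph{synchronous} translation $\tofo_\Sigma$ of such three-trace formulas that maintains three first-order variables $x_1,x_2,x_3$ held at a common level by $E$: a tagged atom $(p,\pi_k)$ becomes $P_p(x_k)$, and each temporal operator advances or rewinds all three variables in lockstep, so that their pairwise $E$-alignment is preserved (the successor of each level-aligned variable is again level-aligned). With $\tofo_\Sigma$ in hand, I transcribe Lewis' semantics literally: the membership conditions $(\cdot,\cdot)\in\Sigma_a^\pi$ become instances of $\tofo_\Sigma(\Sigma(a))$ with the reference component bound to $x$'s trace, while the antecedent and consequent checks $\ekripke^+,\pi',i\models\varphi_j$ become $\tofo(\varphi_j,\cdot)$ at the current level. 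Two subtleties must be handled: first, the \emph{quantifier structure} — the vacuity disjunct together with the $\exists\forall$ threshold witness of $\cf_a$, and the extra leading $\forall$ over chains in $\ucf_a$ — which maps directly onto nested first-order quantifiers over $E$-linked level-$i$ variables; and second, the \emph{bookkeeping between levels}, since the similarity formula is evaluated from the initial position while $\varphi_1,\varphi_2$ are evaluated at $i$, so for each quantified trace I also produce its initial-position witness via $\mathit{init}(\cdot)$ and link it to the level-$i$ witness by the definable ``same trace'' predicate $u<v \lor v<u \lor u=v$. For position-robust similarity formulas such as the subset metric of Section~\ref{sec:motivation}, whose $\LTLglobally$/$\LTLglobally^-$ conjuncts make them level-independent, this last step is vacuous, but I keep it to stay faithful to the general definition of $\Sigma_a$.

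With all cases defined, correctness follows by a routine structural induction on $\varphi$, the interesting steps being $\K_a$ and the counterfactuals, where one checks that $E$ and $\tofo_\Sigma$ faithfully reproduce the observation-equivalence quantification and the relation $\Sigma_a^\pi$, respectively. I expect essentially all the real work to sit in the counterfactual case: verifying that the synchronous multi-trace translation $\tofo_\Sigma$ preserves level-alignment across arbitrarily nested temporal (and, if present, knowledge) operators, and that the reuse of the reference trace $\pi$ as both index and first component of $\Sigma_a^\pi$ — together with the non-total, merely-preorder nature of the relation — is encoded so that vacuous and inaccessible chains are treated exactly as in the semantics. Everything produced is a syntactically well-formed \foe formula, so the construction simultaneously places \yltl inside the hyperlogic \foe and, via the decidable \foe model-checking, yields decidability of \yltl model checking on finite structures.
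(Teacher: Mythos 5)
Your proposal is correct and follows essentially the same route as the paper: a Kamp-style inductive translation carrying a current trace-position variable, the equal-level predicate with a prefix observation-equivalence guard for $\K_a$, an inlined translation of the (counterfactual-free) similarity formulas to encode $\Sigma_a^\pi$, and a literal transcription of the vacuity/$\exists\forall$ and chain-wise $\forall\exists\forall$ quantifier structures for $\cf_a$ and $\ucf_a$. The only differences are bookkeeping choices (a separate synchronous three-variable translation $\tofo_\Sigma$ and explicit initial-position anchoring of the similarity formula, where the paper instead reuses $\tofo$ on tagged atoms $(p,\pi)$ evaluated at the current level), which do not change the substance of the argument.
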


\begin{proof}
	The proof relies on a linear translation $\tofo$ from \yltl to \foe. We will use syntactic sugar for successors and minimal positions: $\suc(x,y) \Coloneqq x < y \land \lnot \exists z . \, x < z < y$ and $\mini(x) \Coloneqq \lnot \exists y . \, \suc(y,x)$. 
	In the end, the \foe formula proving our claim is obtained from $\varphi$ by 
	\begin{align}
 \tofo(\varphi) \Coloneqq \forall x_0 . \, \mathit{min}(x_0) \rightarrow \tofo(\varphi,x_0) \enspace . \label{eq:proof}
 \end{align}
The \foe formula $\tofo(\varphi,x_0)$ is constructed inductively based on the current sub-formula of the \yltl formula $\varphi$ (ranging over a set $\AP$) and the current time-point of interest encoded in the second argument, which is initially $x_0$ but may change through trace quantification from, e.g., epistemic operators. Recall that the first-order variables of  \foe are in fact tuples $(\pi,n) \in \Pi \times \mathbb{N}$ of a trace variable and a position, let us denote for some tuple $x_t = (\pi,n)$: $x_t|_1 = \pi$ and $x_t|_2 = n$ for projecting to the components of the tuple. Note that $\varphi$ technically includes atomic propositions from the set $\AP \cup (\AP \times \Pi)$ since we also need to translate the \kltl formulas obtained from the similarity map, which range over tuples of atomic propositions and trace variables. These tuples from $\AP \times \Pi$ are unrelated to the first-order variables $(\pi,n) \in \Pi \times \mathbb{N}$ and become relevant only when translating counterfactual operators. We start with the simpler cases, for which the construction of the \foe formula is as follows.
	\begin{alignat*}{2}
		&\tofo((p,\pi),x_t)	&	~=~ &P_p((\pi,x_t|_2))\\[0.5ex]
		&\tofo(p,x_t)	&	~=~ &P_p(x_t)\\[0.5ex]
		&\tofo(\lnot \varphi,x_t)	&	~=~  &\lnot \tofo(\varphi,x_t)\\[0.5ex]
		&\tofo(\varphi_1 \lor \varphi_2, x_t)	&	=~  &\tofo(\varphi_1,x_t) \lor \tofo(\varphi_2,x_t)\\[0.5ex]
		&\tofo(\LTLnext \varphi,x_t)	&	=~  &\exists x_t^+ . \, \suc(x_t,x_t^+) \land \tofo(\varphi,x_t^+)\\[0.5ex]
		&\tofo(\LTLnext^- \varphi,x_t)	& =~  &\exists x_t^- . \, \suc(x_t^-,x_t) \land \tofo(\varphi,x_t^-)\\[0.5ex]
		&\tofo(\varphi_1 \LTLuntil \varphi_2,x_t)	&	=~  &\exists x_2 \geq x_t. \, \tofo(\varphi_2,x_2) \, \land (\forall x_1. \, x_t \leq x_1 < x_2 \rightarrow \tofo(\varphi_1,x_1))\\[0.5ex]
		&\tofo(\varphi_1 \LTLuntil^- \! \varphi_2,x_t)	&	=~  &\exists x_2^- \leq x_t. \, \tofo(\varphi_2,x_2^-) \, \land (\forall x_1^-. \, x_t \geq x_1^- > x_2 \rightarrow \tofo(\varphi_1,x_1^-))\\[0.5ex]
		&\tofo(\K_a \, \varphi,x_t)	& =~  &\forall x_e. \, E(x_e,x_t) \land (\forall x_e^-,x_t^- . \, x_e^- \leq x_e \land \, x_t^- \leq x_t \land E(x_e^-,x_t^-) \rightarrow\\
		& & & \!\!\!\!\!\!\!\!\bigwedge_{p \in \Omega(a)} P_p(x_e^-) \leftrightarrow P_p(x_t^-))\rightarrow \tofo(\varphi,x_e)
	\end{alignat*}
The most involved formulas are obtained from translating the counterfactual operators. Note that the expression $\Sigma(a)(x_t|_1,x_t|_1,x_e|_1),x_t)$ that appears throughout the formulas simply denotes the \kltl formula characterizing the similarity relation of agent $a$, where the parameters are in this case instantiated by the trace variables of $x_t$ (twice) and of $x_e$. This double instantiation results from encoding accessibility via the similarity relation. The translation for the counterfactuals proceeds as follows.
		\begin{alignat*}{2}
		\tofo(\varphi_1 \cf_a \varphi_2,x_t)	~=~ & & &(\forall x_e . \, E(x_e,x_t) \land \tofo(\Sigma(a)(x_t|_1,x_t|_1,x_e|_1),x_t) \rightarrow \lnot \tofo(\varphi_1,x_e))\\
		& & &  \lor \exists x_e. E(x_e,x_t) \land \tofo(\Sigma(a)(x_t|_1,x_t|_1,x_e|_1),x_t)\ \land \tofo(\varphi_1,x_e)\\
		& & & \land \forall x_c . \tofo(\Sigma(a)(x_t|_1,x_c|_1,x_e|_1),x_t) \rightarrow (\tofo(\varphi_1,x_c) \rightarrow \tofo(\varphi_2,x_c))\\[1ex]
		\tofo( \varphi_1 \ucf_a \varphi_2,x_t)	~=~ & & & (\forall x_a . \, E(x_a,x_t) \land \tofo(\Sigma(a)(x_t|_1,x_t|_1,x_a|_1),x_t)\land \tofo(\varphi_1,x_a)\\
		& & &  \rightarrow \exists x_e. E(x_e,x_a) \land \tofo(\Sigma(a)(x_t|_1,x_e|_1,x_a|_1),x_t)  \land \tofo(\varphi_1,x_e)\\
		& & & \land \forall x_c . \tofo(\Sigma(a)(x_t|_1,x_c|_1,x_e|_1),x_t) \rightarrow (\tofo(\varphi_1,x_c) \rightarrow \tofo(\varphi_2,x_c))
	\end{alignat*}
Note that this function is well-defined only because we do not allow formulas from the similarity map to themselves include counterfactuals. Otherwise, the translation function could include a circular dependency where translating a counterfactual operator requires translating a similarity relation which in turn again requires translating a counterfactual and so on.

In the end, the equivalence between $\tofo(\varphi)$ and $\varphi$ (cf.~Equation~\ref{eq:proof}) can be shown by structural induction over $\varphi$.
\end{proof}

While Lemma~\ref{lem:yltltofoe} alone is just a statement about comparative expressiveness, it also indirectly provides us with an algorithm that given a system as a finite Kripke structure and a specification as a \yltl formula automatically verifies whether the systems satisfies the formula. This is quite remarkable as previous results for model checking logics that combine temporal operators and counterfactuals only considered counterfactuals as top-level operators~\cite{FinkbeinerS23}. Compared to this work, we lose the ability to express $\omega$-regular temporal properties, but gain the ability to nest counterfactuals and temporal operators, and additionally include knowledge operators. While nesting counterfactuals is mostly of theoretical interest, the additional knolwedge operators are crucial for expressing explainability. To the best of our knowledge, we present the first algorithm for model checking a logic that combines temporal operators and counterfactuals arbitrarily.
\medskip
\begin{theorem}[\yltl Model Checking]\label{thm:mc}
	There is an algorithm that, given a finite extended Kripke structure $\ekripke$ and a \yltl formula $\varphi$, checks whether $\ekripke \models \varphi$.
\end{theorem}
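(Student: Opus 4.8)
The plan is to obtain the theorem as an essentially immediate consequence of Lemma~\ref{lem:yltltofoe} combined with the cited decidability of \foe model checking over finite structures. First I would run the translation $\tofo$ from the proof of Lemma~\ref{lem:yltltofoe} on the given \yltl formula $\varphi$, producing the closed \foe formula $\varphi' = \tofo(\varphi)$. This step is effective — indeed linear, as established in the lemma — and, crucially, it compiles the observation map $\Omega$ and the similarity map $\Sigma$ into $\varphi'$ (the former through the conjunction over $\Omega(a)$ in the translation of $\K_a$, the latter through the embedded, already-translated similarity formulas in the counterfactual cases). What remains is therefore a plain \foe formula over the base propositions $\AP$, interpreted over the trace set $\Pi(\kripke)$.

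Second, I would appeal to the decidability of \foe model checking over finite Kripke structures~\cite{FinkbeinerZ17,CoenenFHH19}. Since $\ekripke$ is finite by assumption and $\varphi'$ is closed, the existing procedure decides $\ekripke \models \varphi'$. By Lemma~\ref{lem:yltltofoe} we have $\mods(\varphi) = \mods(\varphi')$, hence $\ekripke \models \varphi$ iff $\ekripke \models \varphi'$, and the two steps — translate, then model-check — together constitute the claimed algorithm.

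Because the real work is carried by Lemma~\ref{lem:yltltofoe}, for the theorem itself the obligations are mostly bookkeeping: I would verify that $\tofo(\varphi)$ is genuinely closed (every first-order variable introduced in the inductive cases is bound, and every trace-indexed proposition $(p,\pi) \in \AP \times \Pi$ arising from the similarity relations is resolved into a predicate application $P_p(\cdot)$ on a concrete trace/position tuple, leaving no free occurrence), and that the model-equivalence of the lemma specializes to finite structures. The one point I would flag as the main potential obstacle is making sure the cited \foe decision procedure applies to the particular shape of $\varphi'$: the counterfactual translations introduce deep quantifier alternations and place the similarity subformulas under the equal-level predicate $E$, so I would confirm that the target logic and its model-checking algorithm impose no syntactic restriction (on quantifier alternation, nesting, or use of $E$) that $\varphi'$ might violate. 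Since $\varphi'$ is by construction an unrestricted formula of \foe, no such restriction is expected and the reduction goes through.
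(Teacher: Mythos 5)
Your proposal is correct and follows essentially the same route as the paper: translate $\varphi$ to an equivalent \foe formula via Lemma~\ref{lem:yltltofoe}, then invoke the known decidability of \foe model checking on finite structures (which the paper obtains by embedding \foe into HyperQPTL and using Rabe's algorithm). The additional bookkeeping you flag (closedness of $\tofo(\varphi)$, absence of syntactic restrictions in the target procedure) is sensible but does not change the argument.
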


\begin{proof}
	In Lemma~\ref{lem:yltltofoe} we have shown that we can construct an equivalent \foe-formula $\varphi'$ for the \yltl formula $\varphi$. Since \foe is strictly less expressive than HyperQPTL (LTL with trace and propositional quantifiers)~\cite{CoenenFHH19}, and there is a model-checking algorithm for HyperQPTL~\cite{Rabe16}, the claim follows immediately.
\end{proof}

The existence of a model-checking algorithm is what makes our logic useful in practice: Not only is it possible to express several notions of explainability, it is also possible to automatically verify them. While an exact complexity analysis of model checking \yltl is out of scope of this paper, it should be noted that the complexity of the current encoding is non-elementary, with the tower of exponents scaling with the number of nested counterfactuals and knowledge operators. However, this is not worse than the complexity of model checking just knowledge and temporal operators~\cite{BozzelliMM24}. Moreover, in practice, we are mostly concerned with formulas that have only a few nested operators, as is the case for all of the explainability requirements formalized in this work.

\subsection{Side Result on the Expressiveness of KLTL}

Besides providing an algorithm for model checking \yltl formulas, Lemma~\ref{lem:yltltofoe} also includes a translation from \kltl to \foe that was loosely described earlier by Hofmann~\cite{Hofmann22}. This translation shows that \foe subsumes \kltl. We now outline how we can combine this with a result from Bozzelli et al.~\cite{BozzelliMP15} to show that \kltl is strictly less expressive than \foe. For completeness, we also recall some results regarding the comparative expressiveness of \kltl and HyperLTL (\ltl with quantification over traces), as well as HyperQPTL (HyperLTL with propositional quantifiers). As a reference, the syntax of HyperQPTL is build according to the following grammar:
$$
	\psi \Coloneqq \exists \pi \ldot \psi \mid \forall \pi \ldot \psi \mid \exists p \ldot \psi \mid \forall p \ldot \psi \mid \varphi' \enspace ,
$$
where $p$ is a fresh atomic proposition and $\pi$ is a trace variable. $\varphi'$ is an \ltl formula, i.e., build according to the grammar of \kltl (Section~\ref{subsec:logics}) without the knowledge operator and past-time operators. The syntax of HyperLTL can be obtained by removing $\forall p \ldot \psi $ and $\exists p \ldot \psi$ from the above grammar of HyperQPTL.

Previously, Bozzelli et al.~\cite{BozzelliMP15} have shown that \kltl's expressiveness is incomparable to the expressiveness of HyperLTL. Rabe~\cite{Rabe16} showed that \kltl can be encoded in HyperQPTL and Hofmann described that this encoding can be adapted for \foe, which we have confirmed in the proof of Lemma~\ref{lem:yltltofoe}. Combined, these results mean that \kltl lies strictly further down in the hierarchy of hyperlogics.
\medskip
\begin{theorem}\label{thm:exp}
	\!\foe is strictly more expressive than \kltl.
\end{theorem}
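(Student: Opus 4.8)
The plan is to establish the two halves of the strictness claim separately: containment of \kltl in \foe, and a witnessing property that separates the two logics. For the containment, I would invoke the translation $\tofo$ constructed in the proof of Lemma~\ref{lem:yltltofoe}. Since \kltl is exactly the fragment of \yltl obtained by dropping the counterfactual operators, restricting $\tofo$ to \kltl formulas already yields, for every \kltl formula, an equivalent \foe formula characterizing the same set of models. Hence every \kltl-definable property is \foe-definable, i.e., \foe is at least as expressive as \kltl.

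For strictness, I would exhibit a property that is \foe-definable but not \kltl-definable. The key external input is the result of Bozzelli et al.~\cite{BozzelliMP15}, which states that the expressiveness of \kltl and HyperLTL is incomparable. In particular, incomparability yields a HyperLTL-definable property $P$ that is \emph{not} \kltl-definable; of the two directions of incomparability, this is the only one I need. It then remains to argue that $P$ is \foe-definable, which follows from the standard embedding of HyperLTL into \foe~\cite{FinkbeinerZ17}: every HyperLTL formula translates into an equivalent \foe sentence, where trace quantifiers become first-order quantifiers over positions lying on (distinct) traces and the equal-level predicate $E$ synchronizes those positions. Composing these facts, $P$ witnesses a property in \foe that lies outside \kltl, so the containment is strict.

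The main obstacle is purely a matter of aligning the semantic settings so that the three expressiveness comparisons live on common ground. All three logics are interpreted over sets of traces (equivalently, over the trace sets $\Pi(\kripke)$ induced by Kripke structures), so the notion of ``defining the same set of models'' is uniform; I would make explicit that the incomparability result of Bozzelli et al. is stated for this same trace-set semantics, and that the HyperLTL-to-\foe embedding preserves the induced class of models. I would also take care that the \kltl semantics used in~\cite{BozzelliMP15} matches the synchronous perfect-recall semantics fixed in Section~\ref{subsec:logics}, so that non-definability in their sense transfers verbatim. Given these alignments, no further work is needed: strict expressiveness is an immediate consequence of one direction of the incomparability together with the two containments.
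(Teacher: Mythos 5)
Your proposal is correct and follows essentially the same route as the paper: containment via the translation of Lemma~\ref{lem:yltltofoe} restricted to \kltl, and strictness by taking the HyperLTL-definable property of Bozzelli et al.~\cite{BozzelliMP15} that is not \kltl-definable and pushing it into \foe via the standard HyperLTL-to-\foe embedding. The only cosmetic difference is that the paper credits that embedding to Coenen et al.~\cite{CoenenFHH19} in the proof text, but the argument is identical.
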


\begin{proof}
	Lemma~\ref{lem:yltltofoe} shows that \foe is at least as expressive as \yltl, and since \yltl subsumes \kltl trivially, it follows that \foe is at least as expressive as \kltl. It therefore only remains to show strictness. Strictness follows from previous results: (1) the proof that \kltl does not subsume HyperLTL presented by Bozzelli et al.~\cite{BozzelliMP15} and (2) by the subsumption of HyperLTL through \foe shown by Coenen et al.~\cite{CoenenFHH19}, as follows:
	(1) Bozzelli et al.\ provide the following HyperLTL formula that cannot be expressed in \kltl:
	$$\varphi_H = \exists \pi \ldot \exists \pi' \ldot p_\pi \LTLuntil \big((p_\pi \land \lnot p_{\pi'}) \land \LTLnext \LTLglobally (p_\pi \leftrightarrow p_{\pi'})\big) \enspace.$$
	The intuition behind their proof is that \kltl cannot compare two different traces at an unbounded number of positions. We refer to the full version~\cite{BozzelliMP14} of Bozzelli et al.'s paper for the detailed proof.
	With (2), we know that there exists an \foe formula $\varphi_{\mathit{fo}}$ that is equivalent to $\varphi_H$. $\varphi_{\mathit{fo}}$ is then not expressible in \kltl, which proves the claimed strictness of the inclusion.
\end{proof}

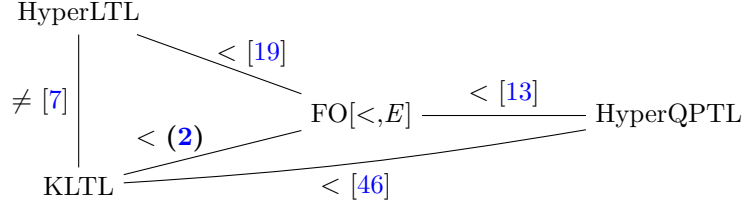
\begin{figure}
\centering
	\begin{tikzpicture}[node distance=6.5em,draw]
		\node[] (hyperltl) {HyperLTL};
		\node[below of=hyperltl] (kltl) {\kltl};
		\node[below right of=hyperltl,xshift = 6em,yshift = 0.75em] (foe) {\foe};
		\node[right of=foe,xshift = 5em] (hyperqptl) {HyperQPTL};
		\draw (hyperltl)  -- (kltl) node[midway,left]{$\neq$~\cite{BozzelliMP15}};
		\draw (hyperltl)  -- (foe) node[pos=0.7,above]{$<$~\cite{FinkbeinerZ17}};
		\draw (foe)  -- (hyperqptl) node[midway,above]{$<$~\cite{CoenenFHH19}};
		\draw (kltl)  edge[bend right=3] node[midway,below]{$<$~\cite{Rabe16}} (hyperqptl) ;
		\draw (kltl)  -- (foe) node[pos=0.3,above,xshift=-0.25em]{$<$~\textbf{(\ref{thm:exp})}};
	\end{tikzpicture}
 
	\caption{\kltl's exact place in the hierarchy of hyperlogics. The result of Theorem~\ref{thm:exp} is highlighted in bold.}
\end{figure}
\section{Related Work}

There is a long line of works on combining modal logics~\cite{sep-logic-combining}. In this section we focus only on works related to combinations of counterfactual, epistemic and temporal operators, which have been combined in pairs for a variety of applications. A connection between knowledge and counterfactual dependencies in the situation calculus has been drawn by Khan and Lespérance~\cite{KhanL21}. This has been extended to define \emph{explanations} for agent behavior~\cite{DBLP:conf/ecai/KhanR23}, in particular accounting for theory-of-mind reasoning. Contrary to these works, we focus on \emph{explainability} as a system property and provide an approach for verification, but we also appeal to theory-of-mind reasoning with our agent-specific similarity relations, which allow to model the internal mental states of the agents. Knowledge and causality have been combined to reason about deceptive AI~\cite{Sakama21}. Counterfactuals and the knowledge modality have also been combined to express hypothetical knowledge~\cite{Halpern99} and rationality~\cite{Sandu21,Stalnaker06} in game theory.  Liu and Lorini~\cite{LiuL23} study modal logics for defining individual explanations for classifiers, and Aguilera-Ventura et al.~\cite{Aguilera-Ventura23} have recently studied grounding similarity relations for counterfactuals.

Besides counterfactual epistemic logics, our work also builds on a long line of research into logics that reason about knowledge and time, which originated in the analysis of distributed protocols~\cite{LadnerR86,FaginHMV1995} and have been applied to a variety of applications such as information-flow security~\cite{MeydenS04,BalliuDG11,HalpernO08}, as well as knowledge-based programs~\cite{MeydenV98}. Counterfactual and temporal reasoning has been combined to reason about temporal aspects of causality~\cite{CoenenFFHMS22,FinkbeinerFMS24,CarelliFS25,ZiemekPFJB22}.

Our work studies the epistemics of explainability and abstracts away from questions such as how to visualize explanations, and what explanations are relevant for a human user in a given context. There is a variety of works that study these orthogonal questions~\cite{KohlBLOSB19,HorakCMHFMDFD22,BrandaoMMLC22,Miller19,LangerOSHKSSB21,SchlickerLOBKW21}. Moreover, there are several works on generating explanations for more complex system architectures~\cite{AudemardKM20,DarwicheJ22}.

\section{Conclusion \& Outlook}
We have studied a logic that combines the long-studied modal operators of counterfactual, epistemic and temporal logics for specification and verification of explainability requirements. We have demonstrated how the logic can be used to define the first formal taxonomy of counterfactual explainability that encompasses the notions of internal, external, general, and weak explainability. We believe this aspect of our study can be spun much further by introducing additional features to the logic, for instance minimality constraints on counterfactual antecedents~\cite{FinkbeinerS23}, or by considering combinations of counterfactual and probabilistic reasoning~\cite{ZiemekPFJB22} as explanatory properties. As another aspect, we have proven that the \yltl model-checking problem is decidable for finite-state multi-agent systems. We plan on building on this result by developing practical model-checking tools for explainability requirements. On the theoretical side, we have made first steps toward analyzing the expressivity of the combined logic in relation to other hyperlogics. These are also, to the best of our knowledge, the first results on model checking and expressivity of counterfactual operators when combined arbitrarily with temporal operators. Building on these results, we have recently proposed an approach for analyzing explainability and privacy tradeoffs in multi-agent systems, which uses a second-order version of \yltl to enable quantification over arbitrary counterfactual antecedents~\cite{FinkbeinerFS25}.

\bibliography{sn-bibliography}

\end{document}